\setlist[itemize]{label=$\circ$}
\setlist[description]{labelindent=\parindent}
\DeclarePairedDelimiter\paren{\lparen}{\rparen}
\DeclarePairedDelimiter\abs{\lvert}{\rvert}
\DeclarePairedDelimiter\set{\{}{\}}
\DeclarePairedDelimiterX\setc[2]{\{}{\}}{\,#1 \;\delimsize\vert\; #2\,}
\DeclarePairedDelimiterX\parenc[2]{\lparen}{\rparen}{\,#1 \;\delimsize\vert\; #2\,}
\newcommand{\cc}[1]{\ensuremath{\mathsf{#1}}}
\newcommand{\pp}[1]{\textup{#1}}
\newcommand{\op}[1]{\ensuremath{\operatorname{#1}}}
\newtheorem{theorem}{Theorem}[section]
\newtheorem{lemma}[theorem]{Lemma}
\newtheorem{corollary}[theorem]{Corollary}
\newcommand{\NP}{\cc{NP}}
\newcommand{\APX}{\cc{APX}}
\renewcommand{\P}{\cc{P}}
\newcommand{\FPTAS}{\cc{FPT\textrm{-}AS}}
\newcommand{\XP}{\cc{XP}}
\newcommand{\FPT}{\cc{FPT}}
\newcommand{\W}[1][]{%
 \cc{W}
  \ifthenelse{\isempty{#1}}%
    {}
    {[#1]}
}
\newcommand{\kmc}{$k$-\pp{Multicolored Clique}}
\newcommand{\CSP}{\pp{CSP}}
\newcommand{\MCSP}{\pp{MAX-CSP}}
\newcommand{\SAT}{\pp{CNF-SAT}}
\newcommand{\MSAT}{\pp{MAX-CNF-SAT}}
\newcommand{\MDNF}{\pp{MAX-DNF-SAT}}
\newcommand{\MPAR}{\pp{MAX-PARITY}}
\newcommand{\MAJ}{\pp{MAJORITY}}      \newcommand{\THR}{\pp{THRESHOLD}}
\newcommand{\MMAJ}{\pp{MAX-MAJORITY}} \newcommand{\MTHR}{\pp{MAX-THRESHOLD}}
\newcommand{\p}[2][]{
\ensuremath{
#2 
\ifthenelse{\isempty{#1}}
{^*} 
{(#1)} 
}}
\newcommand{\nd}{\op{nd}}
\newcommand{\cw}{\op{cw}}
\newcommand{\tw}{\op{tw}}
\newcommand{\mtw}{\op{mtw}}
\newcommand{\fvs}{\op{fvs}}
\newcommand{\vc}{\op{vc}}
\newcommand{\G}[1]{%
 \op{G}^*_{#1}
}
\newcommand{\occ}{\op{occ}}
\newcommand{\poly}{\op{poly}}
\newcommand{\andc}{\op{AND}}
\newcommand{\orc}{\op{OR}}
\newcommand{\parc}{\op{PARITY}}
\newcommand{\xorc}{\op{XOR}}
\newcommand{\majc}{\op{MAJORITY}}
\newcommand{\dotcup}{\mathbin{\dot\cup}}
\begin{document}

\title{Complexity and Approximability of Parameterized MAX-CSPs} 

\author[1]{Holger Dell}
\author[2]{Eun Jung Kim}
\author[3]{Michael Lampis}
\author[4]{Valia Mitsou\footnote{Supported by ERC Starting Grant PARAMTIGHT (No. 280152)}}
\author[5]{Tobias M\"omke\footnote{This research is supported by 
      Deutsche Forschungsgemeinschaft grant BL511/10-1}}

\affil[1]{
  Saarland University and Cluster of Excellence, Saarbr\"{u}cken%
  \\\texttt{hdell@mmci.uni-saarland.de}
}
\affil[2,3]{
  Université Paris Dauphine%
  \\\texttt{eunjungkim78@gmail.com, michail.lampis@dauphine.fr}
}
\affil[4]{
  SZTAKI, Hungarian Academy of Sciences, Budapest%
  \\\texttt{vmitsou@sztaki.hu}
}
\affil[5]{
    Saarland University, Saarbr\"{u}cken%
  \\\texttt{moemke@cs.uni-saarland.de}
}

%
%
%

\maketitle

\begin{abstract}

We study the optimization version of constraint satisfaction problems (Max-CSPs) 
in the framework of parameterized complexity;
the goal is to compute the maximum fraction of constraints that can be satisfied 
simultaneously.
In standard CSPs, we want to decide whether this fraction equals one.
The parameters we investigate are structural measures, such as the treewidth or 
the clique-width of the variable--constraint incidence graph of the CSP 
instance.

We consider Max-CSPs with the constraint types \andc, \orc, \parc, and \majc, 
and with various parameters~$k$, and we attempt to fully classify them into the 
following three cases:
\begin{enumerate}
  \item
    The exact optimum can be computed in $\FPT$ time.
  \item
    It is $\W[1]$-hard to compute the exact optimum, but there is a randomized 
    $\FPT$ approximation scheme (\FPTAS), which computes a 
    $(1-\epsilon)$-approximation in time $f(k,\epsilon) \cdot \poly(n)$.
  \item
    There is no \FPTAS\ unless $\FPT=\W[1]$.
\end{enumerate}
For the corresponding standard CSPs, we establish $\FPT$ vs.\ $\W[1]$-hardness 
results.

\end{abstract}

\section{Introduction}

Constraint Satisfaction Problems (\CSP s) play a central role in almost all
branches of theoretical computer science. Starting from \SAT, the prototypical
NP-complete problem, the computational complexity of \CSP s has been widely
studied from various points of view. In this paper we focus on two aspects of
\CSP\ complexity which, though extremely well-investigated, have mostly been
considered separately so far in the literature: parameterized complexity and
approximability. We study four standard predicates and contribute some of the
first results indicating that the point of view of approximability considerably
enriches the parameterized complexity landscape of \CSP s.

\paragraph{Parameterized Complexity.} For a \emph{parameterized} problem $P$, an instance of $P$ 
is a pair $(x,k) \in \Sigma^*\times \mathbb{N}$, where the second part $k$ of the instance is called the
\emph{parameter}. A parameterized problem $P$ is \emph{fixed-parameter tractable} ($\FPT$ in short) if there is an algorithm solving any input instance $(x,k)$ of $P$ in time $O(f(k)\cdot |x|^{O(1)})$ for some computable function $f$. Such an algorithm is called an \FPT-algorithm.

For two parameterized problems $P$ and $Q$, a \emph{parameterized reduction} from $P$ to $Q$ is an \FPT-algorithm which, given an instance $(x,k)$ of $P$, outputs an instance $(x',k')$ of $Q$ such that (i) $(x,k)$ is a yes-instance if and only if $(x',k')$ is a yes-instance, and (ii) $k' \le g(k)$ for some computable function $g$. The notion of paramterized reduction defines the hierarchy of parameterized complexity classes $$\FPT=W[0] \subseteq \W[1] \subseteq \W[2] \subseteq \cdots \subseteq \XP,$$ where each class is the family of problems admitting a parameterized-reduction to some basic problem. The central assumption in parameterized complexity is $\FPT \neq W[1]$. Further  parameterized complexity terminology used in this paper can be found in \cite{DF13}.

\paragraph{Parameterized \CSP s.} The vast majority of interesting \CSP s
are \NP-hard  \cite{S78,KSW97}. This has motivated the study of such problems from a
parameterized complexity point of view, and indeed this topic has attracted
considerable attention in the literature \cite{G06,S11,GS11,PRSW14,GS14,S11b}.
We refer the reader to \cite{samer2010constraint} where an extensive
classification of \CSP\ problems for a large range of parameters is given. In
this paper we focus on \emph{structurally} parameterized \CSP s, that is, we
consider \CSP s where the parameter is some measure of the structure of the
input instance.  The central idea behind this approach is to represent the
structure of the \CSP\ using a (hyper-)graph and leverage the powerful tools
commonly applied to parameterized graph problems (such as tree decompositions)
to solve the \CSP.

The typical goal of this line of research is to find the most general
parameterization of a \CSP\ that still remains fixed-parameter tractable ($\FPT$).
To give a concrete example for a very well-known \CSP, \SAT\ is $\FPT$ when
parameterized by the treewidth of its incidence graph\footnote{See the next
section for a definition of incidence graphs} \cite{SzeiderSAT2003} 
but it is $\W$-hard for more general
parameters such as clique-width \cite{OPS13}, or even the more restricted modular
treewidth \cite{PSS13}.  General (boolean) \CSP\ on the other hand, where the
description of each constraint is part of the input is known to be a harder
problem: it is already $\W[1]$-hard parameterized by the incidence treewidth, but
$\FPT$ parameterized by the treewidth of the primal graph \cite{S13a}. Thus,
parameterized investigations aim to locate the boundary where a \CSP\ jumps from
being $\FPT$ to being $\W$-hard. It is of course a natural question how we can deal
with the $\W$-hard cases of a \CSP\ once they are identified.

\paragraph{Approximation.} \CSP s also play a central role in the theory
of (polynomial-time) approximation algorithms \cite{T10,KS15,AK13}. In this context we
typically consider a \CSP\ as an optimization problem (\MCSP) where the goal is
to find an assignment to the variables that satisfies as many of the
constraints as possible. Unfortunately, essentially all non-trivial \CSP s are
hard to approximate ($\APX$-hard) from this point of view \cite{creignou1995dichotomy,KSW97}, even those
where deciding if an assignment can satisfy all constraints is in $\P$ (e.g.
\textsc{2\SAT} or \textsc{Horn SAT}). Thus, research in this area typically focuses on discovering
exactly the best approximation ratio that can be achieved in polynomial time.
Amazingly, for many natural \CSP s this happens to be exactly the ratio achieved
by a completely random assignment \cite{H01}. This motivates the question of
whether we can find natural cases where non-trivial efficient approximations
are possible.

\paragraph{Results.} In this paper we consider four different types of
\CSP s where the constraints are respectively \orc, \andc, \parc\ and \majc\
functions. Our approach follows, for the most part, the standard parameterized
complexity script: we consider the input instance's incidence graph and try to
determine the complexity of the \CSP\ when parameterized by various graph widths.
The new ingredient in our approach is that, in addition to trying to determine
which parameters make a \CSP\ $\FPT$ or $\W$-hard, we also ask if the optimization
versions of $\W$-hard cases can be well-approximated. We believe that this is a
question of special interest since, as it turns out, there are \CSP s for which
$\W$-hardness can be (almost) circumvented using approximation, and others which
are inapproximable.

More specifically, our results are as follows: for \orc\ constraints, which
corresponds to the standard \SAT\ (\MSAT) problem, we present a new hardness
proof establishing that deciding a formula's satisfiability is $\W$-hard even
if parameterized by the incidence graph's neighborhood diversity\footnote{Akin to neighborhood diversity is the \emph{twin-cover} number proposed in~\cite{Ganian11}. On bipartite graphs such as incidence graphs of CSPs, the twin-cover number is essentially the same as the vertex cover number: it differs only on a graph consisting of a single edge, in which the twin-cover number equals 0 while the vertex cover number is 1. Hence, we do not consider the twin-cover number separately as a structural parameter in this paper.}. Neighborhood
diversity is a parameter much more restricted than modular treewidth (already a restriction of clique-width) 
\cite{L12}, for which the strongest previously known $\W[1]$-hardness result
was known \cite{PSS13}. We complement this negative result with a strong
positive approximation result: there exists a randomized $\FPT$ Approximation 
Scheme
($\FPTAS$)\footnote{We follow here the standard definition of $\FPTAS$ given in
\cite{M08}.} for \MSAT\ parameterized by clique-width, that is, an algorithm
which for all $\epsilon>0$ runs in time $f(k,\epsilon)n^{O(1)}$ and returns an
assignment satisfying $(1-\epsilon)\mathrm{OPT}$ clauses. Thus, even though we
establish that solving \SAT\ exactly is $\W$-hard even for extremely restricted
dense graph parameters, \MSAT\ is well-approximable even in the quite
general case of clique-width. To the best of our knowledge, this is the first
approximation result of this type for a $\W$-hard \MCSP\ problem.

Recalling that \MSAT\ is $\FPT$ parameterized by the treewidth of the incidence
graph, we consider other problems for which the jump from treewidth to
clique-width could have interesting complexity consequences. We show that \MDNF\ 
and \MPAR, which are $\FPT$ parameterized by treewidth, exhibit two wildly 
different behaviors. On the one hand, the problem of maximizing the largest 
possible number of satisfied \parc\ constraints remains $\FPT$ even for dense 
parameters such as clique-width. On the other hand, by modifying our reduction for \SAT, we are
able to show not only that maximizing the number of satisfied \andc\
constraints is $\W[1]$-hard parameterized by neighborhood diversity, but
also that this problem cannot even admit an $\FPTAS$ (like \MSAT), unless
W[1]=$\FPT$. We recall that \parc\ and \andc\ constraints are similar in other
aspects: for example, for both we can decide in polynomial time if an
assignment satisfying all constraints exists.

Finally, we consider \CSP s with \majc\ constraints, that is, constraints which
are satisfied if at least half their literals are true. We give a reduction
establishing that this is an interesting case of a natural constraint type for
which deciding satisfiability is already $\W[1]$-hard parameterized by treewidth 
(we actually show $\W[1]$-hardness for the more restricted case of incidence 
feedback vertex set) and by neighborhood diversity.
We complement this negative result with two algorithmic results:
first, we show that the corresponding \MCSP\ is $\FPT$ parameterized by
incidence vertex cover. Then, we use this algorithm as a sub-routine to obtain
an $\FPTAS$ for incidence feedback vertex set. Both of these algorithmic
results also apply to the more general case of \THR\ constraints. We leave it
as an interesting open problem to examine if the
approximation algorithm for feedback vertex set can be extended to treewidth.

\tikzset{ref/.style = {
   text=gray
  }
}
\colorlet{hard}{red!60!white}
\colorlet{easy}{green}
\colorlet{medium}{cyan}

\begin{figure}[tp]
  \centering
  \begin{tikzpicture}[rounded corners]
    \node[fill=hard] (cw) at (0,0) {\p\cw};
    \node[fill=easy] (tw) at (1,1) {\p\tw};
    \node[fill=easy] (fvs) at (3,1) {\p\fvs};
    \node[fill=hard,label={[ref]above:L\ref{thm:CNFWhardness}}] (nd) at (2,0) 
    {\p\nd};
    \node[fill=easy] (vc) at (4,0) {\p\vc};
    \path[draw,latex-] (cw) -- (tw);
    \path[draw,latex-] (tw) -- (fvs);
    \path[draw,latex-] (fvs) -- (vc);
    \path[draw,latex-] (cw) -- (nd);
    \path[draw,latex-] (nd) -- (vc);

    \node at (2,-1) {CNF-SAT};
  \end{tikzpicture}
  \hspace{1cm}
  \begin{tikzpicture}[rounded corners]
    \node[fill=hard] (cw) at (0,0) {\p\cw};
    \node[fill=hard] (tw) at (1,1) {\p\tw};
    \node[fill=hard,label={[ref]above:T\ref{thm:majhard}}] (fvs) at (3,1) {\p\fvs};
    \node[fill=hard,label={[ref]above:T\ref{thm:majndhard}}] (nd) at (2,0) {\p\nd};
    \node[fill=easy] (vc) at (4,0) {\p\vc};
    \path[draw,latex-] (cw) -- (tw);
    \path[draw,latex-] (tw) -- (fvs);
    \path[draw,latex-] (fvs) -- (vc);
    \path[draw,latex-] (cw) -- (nd);
    \path[draw,latex-] (nd) -- (vc);

    \node at (2,-1) {MAJORITY-CSP};
  \end{tikzpicture}
  \caption{
    The parameterized complexity status of CNF-SAT and MAJORITY-CSP.  The boxes
    depict different parameterizations of each problem:
    \emph{red} means that the problem is \cc{W[1]}-hard and
    \emph{green} means that the problem is \cc{FPT}.
    Recall that DNF-SAT and PARITY-CSP are polynomial-time computable.
    An \emph{arrow} indicates the existence of an approximation-preserving reduction
    from the problem at the tail to the problem at the head, so for example, the
    arrow $\p{\fvs} \rightarrow \p{\tw}$ for CNF-SAT indicates that there is a
    reduction from CNF-SAT parameterized by $\p{\fvs}$ to CNF-SAT parameterized by
    $\p{\tw}$. In fact, the reductions we depict here are trivial since, for
    example, $\p{\fvs}$ is bounded by a function of $\p{\tw}$.
  }
\end{figure}
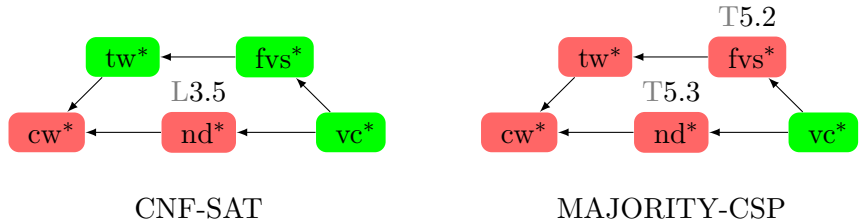

\begin{figure}[tp]
  \centering
  \begin{tikzpicture}[rounded corners]
    \node[fill=medium,label={[ref]above:T\ref{theorem: cw}}] (cw) at (0,0) 
    {\p\cw};
    \node[fill=easy] (tw) at (1,1) {\p\tw};
    \node[fill=easy] (fvs) at (3,1) {\p\fvs};
    \node[fill=medium] (nd) at (2,0) {\p\nd};
    \node[fill=easy] (vc) at (4,0) {\p\vc};
    \path[draw,latex-] (cw) -- (tw);
    \path[draw,latex-] (tw) -- (fvs);
    \path[draw,latex-] (fvs) -- (vc);
    \path[draw,latex-] (cw) -- (nd);
    \path[draw,latex-] (nd) -- (vc);

    \node at (2,-1) {MAX-CNF-SAT};
  \end{tikzpicture}
  \hspace{1cm}
  \begin{tikzpicture}[rounded corners]
    \node[pattern=north east lines,pattern color=medium] (cw) at (0,0) {\p\cw};
    \node[pattern=north east lines,pattern color=medium] (tw) at (1,1) {\p\tw};
    \node[fill=medium,label={[ref]above:T\ref{thm:thr_fvs}}] (fvs) at (3,1) {\p\fvs};
    \node[pattern=north east lines,pattern color=medium] (nd) at (2,0) {\p\nd};
    \node[fill=easy,label={[ref]above:T\ref{thm:thr_vc}}] (vc) at (4,0) {\p\vc};
    \path[draw,latex-] (cw) -- (tw);
    \path[draw,latex-] (tw) -- (fvs);
    \path[draw,latex-] (fvs) -- (vc);
    \path[draw,latex-] (cw) -- (nd);
    \path[draw,latex-] (nd) -- (vc);

    \node at (2,-1) {MAX-MAJORITY-CSP};
  \end{tikzpicture}
  \\[1cm]
  \begin{tikzpicture}[rounded corners]
    \node[fill=hard] (cw) at (0,0) {\p\cw};
    \node[fill=easy] (tw) at (1,1) {\p\tw};
    \node[fill=easy] (fvs) at (3,1) {\p\fvs};
    \node[fill=hard,label={[ref]above:T\ref{thm:dnf}}] (nd) at (2,0) {\p\nd};
    \node[fill=easy] (vc) at (4,0) {\p\vc};
    \path[draw,latex-] (cw) -- (tw);
    \path[draw,latex-] (tw) -- (fvs);
    \path[draw,latex-] (fvs) -- (vc);
    \path[draw,latex-] (cw) -- (nd);
    \path[draw,latex-] (nd) -- (vc);

    \node at (2,-1) {MAX-DNF-SAT};
  \end{tikzpicture}
  \hspace{1cm}
  \begin{tikzpicture}[rounded corners]
    \node[fill=easy,label={[ref]above:T\ref{thm:parity}}] (cw) at (0,0) {\p\cw};
    \node[fill=easy] (tw) at (1,1) {\p\tw};
    \node[fill=easy] (fvs) at (3,1) {\p\fvs};
    \node[fill=easy] (nd) at (2,0) {\p\nd};
    \node[fill=easy] (vc) at (4,0) {\p\vc};
    \path[draw,latex-] (cw) -- (tw);
    \path[draw,latex-] (tw) -- (fvs);
    \path[draw,latex-] (fvs) -- (vc);
    \path[draw,latex-] (cw) -- (nd);
    \path[draw,latex-] (nd) -- (vc);

    \node at (2,-1) {MAX-PARITY-CSP};
  \end{tikzpicture}
  \caption{
    The parameterized complexity status of MAX-CSP problems.
    The \emph{gray} labels above the boxes indicate the theorem in which we 
    establish the result; previously known results are displayed without 
    reference.
    \emph{Red} means that the problem is \cc{W[1]}-hard to compute exactly, and 
    there is no \FPTAS\ unless $\cc{FPT}=\cc{W[1]}$.
    \emph{Blue} means that the problem is \cc{W[1]}-hard to compute exactly, and 
    there is an \FPTAS.
    \emph{Green} means that the problem is \cc{FPT} to compute exactly.
    The \emph{blue/white} stripes mean that it's \cc{W[1]}-hard to compute 
    exactly, and it's open whether there is an \FPTAS.
  }
\end{figure}
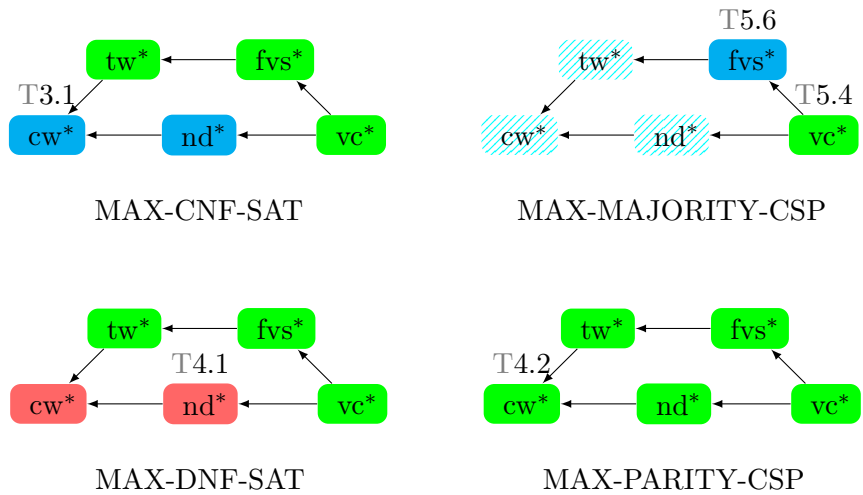

\section{Preliminaries}

\paragraph{Boolean \CSP.} A Boolean Constraint Satisfaction Problem, \CSP\ in short, $\psi$ is defined as a set $\set{ C_1, \ldots, C_m}$ of $m$
constraints over a set $X(\psi)=\set{x_1,\dotsc,x_n}$ of $n$ variables and
their negations.  Each constraint $C_i$ is regarded as a function of literals
(positive or negative appearances of variables) mapped to the set $\set{0,1}$,
where literals can take the values $0$ or $1$. Furthermore, we define $|C_j|$
to denote the arity of constraint $C_j$ (the number of literals that occur in
$C$) and $|\psi|=m$ the number of constraints in $\psi$. For simplicity, we
also write $l_i \in C_j$ for a literal $l_i$ and a constraint $C_j$ if $l_i$
appears in $C_j$.

We will be dealing with Boolean \CSP\ for four
well-studied Boolean functions: \orc\ constraints, \andc\ constraints, \parc\
(or \xorc) constraints and \majc\ constraints. We say that an assignment
$t:X\rightarrow\set{0,1}$ satisfies a constraint $C$ of type: 

\begin{itemize} 

\item \orc, if $\exists l_i \in C$, $t(l_i)=1$; 

\item \andc, if $\forall l_i \in C$, $t(l_i)=1$; 

\item \parc, if it satisfies some equation $\Sigma_{l_i\in C} t(l_i) = b$ (for
$b\in\{0,1\}$) modulo 2; 

\item \majc, if at least $\lceil\sfrac{|C|}{2}\rceil$ literals in $C$ are set
to $1$. More generally, we may consider \THR\ constraints, where a certain
threshold number of literals must be set to true to satisfy the constraint.

\end{itemize}

Let $\occ(\psi)=\sum_{C \in \psi} |C|$ be the total number of variable 
occurrences in $\psi$, that is, the total size of the formula.
For a variable $x$, we write $\psi_x$ for the set of all constraints $C\in\psi$ 
where $x$ occurs either positively or negatively; for the functions we consider
without loss of generality, no constraint contains both literals.  Thus, the total
number of occurrences of a variable $x$ is equal to $|\psi_x|$.

We are dealing also with \MCSP s, where given a set of constraints $\psi$, we
are interested in finding an assignment to the variables that maximizes the
number of satisfied constraints. The natural decision version of this problem
is, given a target $k$, decide whether there exists an assignment that
satisfies at least $k$ constraints. Clearly, the problem where we want to
decide whether we can satisfy all the constraints is a special case of the
above decision problem since we can set $k=m$, but in some cases we consider
this simpler decision version, particularly when we want to show hardness.

In the case of \orc\ constraints, the \CSP\ and \MCSP\ problems correspond to
the more widely known \SAT\ and \MSAT\ problems.  In this case we call the
constraints \emph{clauses}. When the constraint function is \andc, the \MCSP\
problem is called \MDNF.  In that case, the constraints are called
\emph{terms}. The problem \MPAR\ is also known as \pp{MAX-LIN-2} 
(satisfy a maximum number of given linear equations modulo 2).

\paragraph{Incidence graph and structural parameters.} For a \CSP\ $\psi$, the incidence graph $\G{\psi}$
is defined as the bipartite graph where we construct one vertex $v_i$ for each 
(unsigned) variable $x_i$ and one vertex $u_j$ for each constraint $C_j$ and 
connect $v_i$ with $u_j$ if $x_i \in C_j$. 

We are interested in parameterizations of the incidence graph $\p[\G{\psi}]{\op{p}}$ 
(or simply $\p{\op{p}}$ if $\G{\psi}$ is clear from the context), where
$\op{p}$ is a structural parameter of $\G{\psi}$.  We are mostly interested in
the two most widely studied graph parameters, treewidth $\p{\tw}$ and
clique-width $\p{\cw}$. The definitions of treewidth and clique-width are rather lengthy, and we refer the reader to standard parameterized
complexity textbooks for the definitions, for example~\cite{DF13}.

Another structural parameter we study is the \emph{incidence neighborhood diversity} denoted as $\p{\nd}$. 
A graph $G(V,E)$ has neighborhood diversity $\nd(G) = k$ if we can partition 
  $V$ into $k$ sets $V_1, \dots, V_k$ such that, for all $v\in V$ and all 
  $i\in\{1,\dots,k\}$, either $v$ is adjacent to every vertex in $V_i$ or it is 
  adjacent to none of them.
In other words, $\nd(G) = k$ if $V$ can be partitioned into $k$ modules that are 
either cliques or independent sets. We also investigate the complexity of CSPs parameterized by the vertex cover number $\p{\vc}$ and 
the feedback vertex set number $\p{\fvs}$ of the incidence graph, that is, the minimum number of vertices
that need to be deleted to make the graph edgeless and acyclic, respectively. 

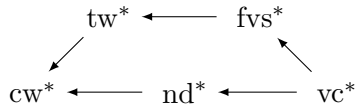
\begin{figure}[tp]
  \centering
  \begin{tikzpicture}
    \node (cw) at (0,0) {\p\cw};
    \node (tw) at (1,1) {\p\tw};
    \node (fvs) at (3,1) {\p\fvs};
    \node (nd) at (2,0) {\p\nd};
    \node (vc) at (4,0) {\p\vc};
    \path[draw,latex-] (cw) -- (tw);
    \path[draw,latex-] (tw) -- (fvs);
    \path[draw,latex-] (fvs) -- (vc);
    \path[draw,latex-] (cw) -- (nd);
    \path[draw,latex-] (nd) -- (vc);
  \end{tikzpicture}
  \caption{\label{fig:parameters}%
    The structural parameters we study and their relationships.
    For example, the arrow between $\p\cw$ and $\p\tw$ means that if the 
    treewidth is bounded, then the clique-width is bounded as well --- more 
    precisely, there is a monotone computable function $f:\mathbf N\to\mathbf N$ 
    so that $\p\cw\le f(\p\tw)$.
    On the other hand, \p\tw\ and \p\nd\ as well as \p\fvs\ and \p\nd\ cannot be 
    bounded by each other in general.
  }
\end{figure}

\section{\SAT\ and \MSAT}

In this section, we consider one of the most fundamental problems in computer 
science: the satisfiability problem for CNF formulas, which can be viewed as a 
constraint satisfaction problem where the only allowed constraints are clauses, 
that is, \orc s of literals.
An optimal solution for \MSAT\ can be computed in $\FPT$ when parameterized by 
the treewidth~$\p{\tw}$ of the incidence graph~\cite{AR11}, and hence \SAT\ can 
be solved in the same time.
When parameterized by the clique-width $\p{\cw}$ of the incidence graph, all 
known exact algorithms for \SAT\ and \MSAT\ run in $\XP$ time~\cite{SS13,STV14}.
Moreover, we do not expect these problems to be in \FPT\ since they are both 
$\W[1]$-hard parameterized by~$\p{\cw}$~\cite{PSS13}.%

In Section~\ref{sec:fptas}, we construct an approximation scheme for \MSAT\ that 
runs in $\FPT$ time.
Intuitively, our algorithm works as follows: given a formula $\phi$ with
`small' incidence clique-width, we first examine the formula to see if it
contains many or few `large' clauses. If the formula contains relatively few
large clauses, then we simply disregard them. We then know that the incidence
graph does not contain `large' bi-cliques, so by a theorem of Gurski and Wanke
\cite{GW00} the remaining formula has small treewidth and we can solve the problem.
If on the other hand the original formula contains almost exclusively large
clauses, then we observe that we can rely on a random assignment to satisfy
almost everything.

The hard part of our algorithm is then how to deal with the general case of
formulas that may contain clauses of varying arities, for which we use a
combination of the ideas for the two basic cases. In particular, after locating
and deleting a negligibly small set of `medium' clauses, we use a counting
argument to find a set of variables that appear almost exclusively in large
clauses. By setting these variables randomly we satisfy almost all large
clauses, and we can then use treewidth to handle the remaining instance.

In Section \ref{sec:nd_whard}, we explore a class of CSP instances that is 
smaller than the class of bounded incidence clique-width instances;
our goal is to understand which incidence graph parameter is responsible for the 
transition from \FPT\ to $\W[1]$.
To this end, we have to look for a graph parameter that is bounded by a function 
of $\p\cw$ (where the problem is hard) but can leave the $\p\tw$ unbounded (where it's 
\FPT).
In fact, \cite{PSS13} shows that the problem is $\W[1]$-hard parameterized by 
the modular incidence treewidth $\p\mtw$, which lies between 
$\p\cw$ and $\p\tw$.%
\footnote{
  A CNF formula has \emph{bounded modular incidence treewidth} if its incidence graph has bounded treewidth 
  after merging   variable/clause modules into a single vertex. Here, a variable/clause module is a set of 
  vertices, corresponding to variables/clauses respectively, with same neighborhood outside of the set.
  In fact, the reduction in \cite{PSS13} constructs a formula whose incidence 
  graph has small feedback vertex set after contracting modules.
}
We study the incidence neighborhood diversity $\p{\nd}$, which is 
incomparable to $\p{\tw}$; however, $\p{\mtw}$ is bounded when $\p{\nd}$ is.
We prove that \SAT\ remains $\W[1]$-hard parameterized by $\p{\nd}$.

Formulas whose incidence graph has neighborhood diversity at most~$k$ seem very 
restrictive: there are only at most~$k$ variable- and clause-types, where all 
variables of the same type belong to the same clauses and all clauses of the 
same type involve the same variables.
This class of formulas is a subset of formulas with $\p{\mtw}\le k$ because 
contracting all modules leaves a graph of \emph{order} at most $k$, which 
trivially has treewidth at most $k$.

\subsection{Approximation Algorithm parameterized by 
  clique-width}\label{sec:fptas}

\begin{theorem}\label{theorem: cw}
  There is a randomized algorithm such that, for every $\epsilon >0$ and given a CNF formula $\psi$ with $n$ 
  variables, $m$ clauses, and incidence clique-width $\cw$, runs in time $f(\epsilon,\cw) 
  \cdot \poly(n+m)$, and outputs a truth assignment that satisfies 
  at least $(1-\epsilon)\cdot \mathrm{OPT}$ clauses in expectation.
\end{theorem}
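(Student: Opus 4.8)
The plan is to set up a win--win argument that trades off between two algorithmic ingredients: a uniformly random assignment, which is excellent for clauses of large arity (a clause of arity $r$ is violated with probability only $2^{-r}$), and the exact \FPT\ algorithm for \MSAT\ parameterized by incidence treewidth~\cite{AR11}, which we may invoke whenever the instance at hand has bounded treewidth. The bridge from clique-width to treewidth is the theorem of Gurski and Wanke~\cite{GW00}: a graph of clique-width at most $\cw$ that contains no $K_{d,d}$ subgraph has treewidth at most some $g(\cw,d)$. Since a biclique $K_{d,d}$ in the incidence graph forces $d$ clauses of arity at least $d$, a formula all of whose clauses have arity at most $d$ has an incidence graph free of $K_{d+1,d+1}$, hence of treewidth at most $g(\cw,d+1)$. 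Throughout I will use the crude bound $\mathrm{OPT}\ge m/2$, which holds because a random assignment satisfies each clause with probability at least $1/2$; this lets me convert additive losses of the form $\epsilon m$ into the multiplicative guarantee $(1-O(\epsilon))\mathrm{OPT}$.

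Fix $d=\lceil\log_2(2/\epsilon)\rceil$, so clauses of arity larger than $d$ are satisfied by a random assignment with probability at least $1-\epsilon/2$. Two extreme regimes are easy and motivate the general argument. If at most $\epsilon m/2$ clauses have arity exceeding $d$ (``few long clauses''), I delete them: the residual formula has all arities at most $d$, hence treewidth at most $g(\cw,d+1)=f'(\cw,\epsilon)$, and~\cite{AR11} solves it optimally, the deletion costing at most $\epsilon m/2\le \epsilon\,\mathrm{OPT}$ clauses. Symmetrically, if almost all clauses are long, one random assignment already satisfies a $(1-\epsilon/2)$-fraction of them in expectation, which dominates $\mathrm{OPT}$. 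The difficulty is the general case, in which both short and long clauses are abundant and, crucially, may share variables.

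To handle the general case I first create a separation of scales. Partitioning the arities between $d$ and a threshold $D=D(\epsilon)$ into $\Theta(1/\epsilon)$ geometrically spaced blocks, a pigeonhole argument locates one block --- the \emph{medium} clauses --- containing at most $\epsilon m/3$ clauses, which I delete at negligible cost; this leaves a gap between the \emph{short} clauses, of arity at most $a$, and the \emph{long} clauses, of arity exceeding $b$, where $a,b$ depend only on $\epsilon$ and $b$ is substantially larger than $a$. The crux is then a counting argument that chooses a set $V_L$ of variables to assign at random so that, after deleting a further $\epsilon$-fraction of the long clauses, (i) every surviving long clause contains at least $d$ variables of $V_L$, while (ii) the variables of $V_L$ collectively occur in at most an $\epsilon$-fraction of the short clauses. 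Requirements (i) and (ii) pull in opposite directions: (i) wants $V_L$ to absorb the variables of the long clauses, whereas (ii) wants $V_L$ to avoid the variables of the short clauses. Balancing them is exactly what the scale gap between $a$ and $b$ buys us --- because each long clause is so much wider than the bound $a$ on short clauses, one can reserve enough of its variables for $V_L$ to force (i) while keeping the total short-clause footprint of $V_L$ under control, the few unavoidable exceptions being the long clauses we delete.

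With $V_L$ in hand, the algorithm assigns $V_L$ uniformly at random, discards every long and medium clause from the formula passed to the solver, and hands the residual --- the short clauses with the random values of $V_L$ substituted, now of arity at most $a$ in the variables $X\setminus V_L$ --- to the treewidth algorithm of~\cite{AR11}, which optimizes the remaining variables exactly. The treewidth step is clean: clique-width cannot increase under taking induced subgraphs, so the residual incidence graph still has clique-width at most $\cw$, and having all arities at most $a$ it contains no $K_{a+1,a+1}$; by~\cite{GW00} its treewidth is therefore at most $g(\cw,a+1)$, a bounded quantity, so a tree decomposition is computable by a standard \FPT\ treewidth algorithm and~\cite{AR11} applies. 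For the guarantee I add up four sources of loss --- the deleted medium clauses, the deleted bad long clauses, the surviving long clauses the random assignment happens to miss, and the short clauses whose satisfaction is spoiled by fixing $V_L$ --- each an $O(\epsilon)$-fraction of $m$; since $m\le 2\,\mathrm{OPT}$, the assignment returned satisfies at least $(1-O(\epsilon))\mathrm{OPT}$ clauses in expectation, and rescaling $\epsilon$ yields the theorem in time $f(\epsilon,\cw)\cdot\poly(n+m)$, every superpolynomial factor depending only on $\epsilon$ (through $a,b,d$) and on $\cw$ (through $g$). The main obstacle is precisely the counting argument producing $V_L$: it is where the randomized treatment of the wide clauses and the exact treatment of the narrow ones must be reconciled on a shared variable set, and it is what forces both the deletion of the medium band and the creation of the scale gap.
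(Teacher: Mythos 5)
Your plan coincides, ingredient for ingredient, with the paper's own proof: the pigeonhole deletion of a geometric band of ``medium'' arities is the paper's Lemma~\ref{lemma:partition} and Corollary~\ref{corollary: middle}; the bridge from clique-width plus bounded arity to bounded treewidth via \cite{GW00} followed by the exact treewidth algorithm of \cite{AR11}, the random assignment for the wide clauses, and the $\mathrm{OPT}\ge m/2$ accounting are used in exactly the same way; your two ``extreme regimes'' are the paper's unbalanced case; and your set $V_L$ with requirements (i) and (ii) is precisely the paper's set $Y$. So this is not a different route --- it is the same route.

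The one genuine gap is the step you yourself call the main obstacle: the existence of $V_L$ is asserted with an intuition (``the scale gap buys it'') but never proven, and that existence claim is the mathematical core of the theorem; everything around it is routine assembly. The paper proves it in two moves. First, an averaging argument (Lemma~\ref{lem:sparse-left}): as long as the current active set $\hat\psi$ of long clauses has more than $\epsilon^2 m$ members, each of arity greater than $D=\epsilon^{-4}d$, we have $\occ(\hat\psi)>\epsilon^{-2}dm$ while $\occ(\psi^{<d})<dm$; if every variable $y$ satisfied $|\psi^{<d}_y|>\epsilon^2|\hat\psi_y|$, then summing over $y$ would give $dm>\occ(\psi^{<d})>\epsilon^2\occ(\hat\psi)>dm$, a contradiction. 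Hence some variable occurs in short clauses at most an $\epsilon^2$-fraction as often as in the active long clauses. Second, a greedy construction with footprint accounting: repeatedly pick such a $y$, add it to $Y$, mark it inactive, remove from $\hat\psi$ every clause that has accumulated $1/\epsilon$ variables of $Y$ (these are the long clauses that the random assignment to $Y$ then satisfies with probability $1-2^{-1/\epsilon}$), and stop once $|\hat\psi|\le\epsilon^2 m$ (these are the written-off long clauses of your requirement (i)). Since each long clause contributes at most about $1/\epsilon$ to $\sum_{y\in Y}|\hat\psi_y|$ before being removed, the number of short clauses touched by $Y$ is at most $\sum_{y\in Y}|\psi^{<d}_y|\le\epsilon^2\sum_{y\in Y}|\hat\psi_y|=O(\epsilon m)$, which is your requirement (ii). Your scale-gap intuition is indeed what drives this --- the factor $D/d=\epsilon^{-4}$ must beat the loss $\epsilon^2$ twice, once in the averaging and once in the footprint sum --- but without this lemma and the accounting, your writeup is a plan rather than a proof.
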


We formulate the following basic lemma about probability distributions.
\begin{lemma}\label{lemma:partition}
For all $\epsilon,L > 0$ there is a $c=c(\epsilon,L)>0$ such that all $c' \ge c$ and all 
sequences $p_1,\dots,p_{c'} \geq 0$ with $\sum_{i=1}^{c'} p_i \leq 1$ have an index $d\leq 
c/L$ with the property
\[
  p_{[d,L\cdot d]} \doteq \sum_{j=d}^{L\cdot d} p_j < \epsilon\,.
\]
\end{lemma}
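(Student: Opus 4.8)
The plan is to cover a geometric portion of the index range by \emph{pairwise disjoint} windows of the prescribed shape $[d,\,L\cdot d]$ and then apply averaging: since the $p_i$ are nonnegative with total mass at most $1$, not all of these window sums can reach $\epsilon$ once there are more than $1/\epsilon$ of them. Throughout I assume $L>1$, which is the relevant regime ($L\le 1$ is degenerate and is treated at the end). I would fix the window positions geometrically by $d_0\doteq 1$ and $d_{t+1}\doteq\lfloor L\,d_t\rfloor+1$, so that the sequence strictly increases and $d_{t+1}>L\,d_t$. The integer windows $J_t\doteq\{\,j\in\mathbb{Z}: d_t\le j\le L\,d_t\,\}$ are then pairwise disjoint, because the largest index of $J_t$ equals $\lfloor L\,d_t\rfloor<d_{t+1}=\min J_{t+1}$, and by construction $\sum_{j\in J_t}p_j=p_{[d_t,\,L\cdot d_t]}$.

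Next I would bound how fast the starting points grow, so that enough of them stay below the threshold $c/L$. Unrolling $d_{t+1}\le L\,d_t+1$ gives $d_t\le L^{t}\cdot\frac{L}{L-1}$, hence $d_{N-1}\le\frac{L^{N}}{L-1}$. Choosing $c\doteq\frac{L^{N+1}}{L-1}$ therefore forces $d_t\le c/L$ for all $t<N$, and consequently $L\,d_t\le c\le c'$ whenever $c'\ge c$; this last inequality is exactly what guarantees that each window $J_t$ lies entirely inside $\{1,\dots,c'\}$, so that every $p_j$ occurring in $p_{[d_t,\,L\cdot d_t]}$ is defined.

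Finally I would invoke pigeonhole. As $J_0,\dots,J_{N-1}$ are disjoint and $\sum_i p_i\le 1$,
\[
  \sum_{t=0}^{N-1} p_{[d_t,\,L\cdot d_t]}
  \;=\;\sum_{t=0}^{N-1}\sum_{j\in J_t}p_j
  \;\le\;\sum_{i=1}^{c'}p_i
  \;\le\;1,
\]
so the smallest of the $N$ window sums is at most $1/N$. Taking $N\doteq\lfloor 1/\epsilon\rfloor+1>1/\epsilon$ makes $1/N<\epsilon$, and the index $d=d_{t^{*}}$ achieving the minimum then satisfies $p_{[d,\,L\cdot d]}\le 1/N<\epsilon$ together with $d\le d_{N-1}\le c/L$. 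This proves the lemma with $c(\epsilon,L)=\frac{L^{N+1}}{L-1}$ for this choice of $N$.

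I do not anticipate a genuine obstacle: the argument is just pigeonhole over geometrically spaced disjoint intervals. The two points needing care are (i) enforcing \emph{true} disjointness through the ``$+1$'' in the recurrence, since overlapping windows would double-count each $p_j$ and cost a spurious factor of two in the averaging, and (ii) coordinating the single constant $c$ so that the candidate index obeys $d\le c/L$ while simultaneously the window $[d,\,L\cdot d]$ fits inside $\{1,\dots,c'\}$; the value $c=L^{N+1}/(L-1)$ is precisely what reconciles both demands. For $L=1$ the same scheme works with singleton windows (at most $1/\epsilon$ of the $p_i$ can exceed $\epsilon$, so a small index qualifies), and for $L<1$ the windows are empty and any $d$ works, so these cases are immediate.
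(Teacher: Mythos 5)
Your proof is correct and follows essentially the same route as the paper's: pack pairwise disjoint windows of the form $[a, L\cdot a]$ into $[1,c]$ and use the fact that at most $1/\epsilon$ of them can carry mass $\geq\epsilon$ (the paper phrases this as a contradiction, you as a direct minimum). The only difference is that you explicitly construct the windows via $d_{t+1}=\lfloor L\,d_t\rfloor+1$ and compute $c=L^{N+1}/(L-1)$, filling in what the paper dismisses with ``clearly there exists a constant $c$.''
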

\begin{proof}
  Let $\epsilon,L>0$.
  We set $c=c(\epsilon,L)$ below.
  Assume for contradiction that $p_{[d,L\cdot d]}\geq \epsilon$ holds for all 
  $d\in [1,c/L]$.
  If there are $1/\epsilon + 1$ disjoint intervals $[a_1,L\cdot 
  a_1],\dots,[a_{1/\epsilon+1},L\cdot a_{1/\epsilon+1}]\subseteq [1,c]$,
  then we arrive at a contradiction with the fact that the $p_i$'s are 
  non-negative and sum to at most one.
  Clearly there exists a constant $c=c(\epsilon,L)$ such that $1/\epsilon + 1$ 
  disjoint intervals of the form $[a,La]$ fit into $[1,c]$.
  This proves the claim.
\end{proof}

For an arbitrary given $\epsilon>0$, we fix $L=\epsilon^{-4}$.
We use Lemma~\ref{lemma:partition} as follows:
For a CNF formula $\psi$, we define $p_i$ as the fraction of clauses of 
size~$i$, that is,
\begin{align*}
  p_i \doteq \frac{\Bigl|\setc[\Big]{ C\in\psi}{|C|=i}\Bigr|}{\abs{\psi}}\,.
\end{align*}
Then Lemma~\ref{lemma:partition} gives us a number $d\leq c(\epsilon)$ such that 
the total fraction of clauses whose size is between $d$ and $\epsilon^{-4}d$ is 
bounded by $\epsilon$.
It is now natural to partition all clauses into short, medium, and long clauses.
More precisely, we define $\psi=\psi^{<d}\dotcup\psi^{[d,D]}\dotcup\psi^{>D}$ 
for $D=\epsilon^{-4}d$ as follows:
\begin{align*}
  \psi^{<d}&\doteq\setc[\Big]{C\in\psi}{ \abs{C} < d }\,,\\
  \psi^{[d,D]}&\doteq\setc[\Big]{C\in\psi}{ d \leq \abs{C} \leq D}\,, 
  \text{and}\\
  \psi^{>D}&\doteq\setc[\Big]{C\in\psi}{ \abs{C} > D}\,.
\end{align*}
An immediate corollary to Lemma~\ref{lemma:partition} is thus that we can choose 
$d\leq c(\epsilon)$ in such a way that $|\psi^{[d,D]}|\leq \epsilon |\psi|$.
\begin{corollary}\label{corollary: middle}
For all $\epsilon > 0$ there is some $c=c(\epsilon)>0$ such that all CNF 
formulas~$\psi$ have some $d=d(\epsilon)\in[1,c]$ with
$
  \abs{\psi^{[d,\epsilon^{-4}d]}} \leq \epsilon \cdot |\psi|\,.
$
\end{corollary}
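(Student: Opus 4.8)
The plan is to derive the corollary directly from Lemma~\ref{lemma:partition} by instantiating the abstract sequence $(p_i)$ with the size distribution of the clauses. Concretely, I would fix $L=\epsilon^{-4}$ and, for a given formula $\psi$, set $p_i$ to be the fraction $\abs{\setc{C\in\psi}{\abs{C}=i}}/\abs{\psi}$ of clauses of arity exactly~$i$. Since every clause has exactly one arity, the numbers $p_1,p_2,\dots$ are non-negative and sum to exactly~$1$, so they meet the hypotheses of the lemma.

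The only point that needs care is the lemma's requirement that the sequence have length $c'\ge c$, where $c=c(\epsilon,L)$ is the constant it provides. The clause arities in $\psi$ are bounded by the number $n$ of variables, so a priori only $p_1,\dots,p_n$ are meaningful. I would simply pad the sequence with zeros, setting $p_i=0$ for $i>n$ and taking $c'=\max(n,c)$; this preserves both non-negativity and the bound $\sum_{i=1}^{c'}p_i\le 1$, and it makes the length admissible. Lemma~\ref{lemma:partition} then yields an index $d\le c/L$ with $\sum_{j=d}^{L\cdot d}p_j<\epsilon$.

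It remains to translate this conclusion into the statement of the corollary. Unfolding the definition of $p_j$, the sum $\sum_{j=d}^{L\cdot d}p_j$ is exactly $\abs{\psi^{[d,\epsilon^{-4}d]}}/\abs{\psi}$, since these count precisely the clauses whose arity lies in $[d,\epsilon^{-4}d]$; hence the inequality becomes $\abs{\psi^{[d,\epsilon^{-4}d]}}<\epsilon\cdot\abs{\psi}$, which is in fact slightly stronger than required. Finally, because $L=\epsilon^{-4}\ge 1$ for the relevant range $\epsilon\le 1$ (the case $\epsilon>1$ being trivial, as then $\abs{\psi^{[d,\epsilon^{-4}d]}}\le\abs{\psi}\le\epsilon\abs{\psi}$ for any~$d$), we have $d\le c/L\le c$, and of course $d\ge 1$ since the summation starts at index~$1$; thus $d\in[1,c]$ as claimed, with $c=c(\epsilon)\doteq c(\epsilon,\epsilon^{-4})$ depending on $\epsilon$ alone.

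I do not anticipate a genuine obstacle here: this is, as the surrounding text already signals, an immediate corollary. The only things to get right are bookkeeping — ensuring the padded sequence meets the length condition, checking that $c/L\le c$ so that the returned index falls in $[1,c]$, and being consistent about interpreting the non-integer endpoint $\epsilon^{-4}d$ as the integer clause-size threshold $\lfloor\epsilon^{-4}d\rfloor$ throughout.
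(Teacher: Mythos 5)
Your proof is correct and takes essentially the same route as the paper: the paper also obtains the corollary by instantiating Lemma~\ref{lemma:partition} with $L=\epsilon^{-4}$ and $p_i$ equal to the fraction of clauses of size~$i$, then reading off $\sum_{j=d}^{Ld}p_j = \abs{\psi^{[d,\epsilon^{-4}d]}}/\abs{\psi} < \epsilon$. Your zero-padding to meet the length requirement $c'\ge c$ and the check that $d\le c/L\le c$ are minor bookkeeping points the paper leaves implicit, and they are handled correctly.
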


If $\psi^{[d,D]}=\emptyset$ holds for $D=\epsilon^{-4}d$ and 
$d\in[1,c(\epsilon)]$, we say that $\psi$ is \emph{$\epsilon$-well separated}.
We call $\psi$ \emph{$\epsilon'$-balanced} if, in addition, we have 
$|\psi^{<d}|\ge\epsilon'm$ and $|\psi^{>D}| \ge \epsilon' m$.
\begin{lemma}
\label{lem:sparse-left}
Let $\psi$ be an $\epsilon$-well separated formula (and thus $V=V(\psi^{<d}) \cup 
V(\psi^{>D})$).

Then, for each subset $\hat{\psi} \subseteq \psi^{>D}$ with $|\hat{\psi}| > 
\epsilon^2 m$, there is a variable $y$ such that
$|\psi^{<d}_y| \le \epsilon^2 |\hat{\psi}_y|$.
\end{lemma}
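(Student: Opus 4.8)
The plan is to argue by contradiction using a double-counting argument on variable occurrences. Suppose no variable $y$ satisfies $|\psi^{<d}_y| \le \epsilon^2 |\hat{\psi}_y|$; that is, assume every variable $y$ obeys the strict inequality $|\psi^{<d}_y| > \epsilon^2 |\hat{\psi}_y|$. I will sum this inequality over all variables and show that the two sides are forced into contradicting bounds, each of which follows directly from the arity constraints that define short and long clauses. The role of $\epsilon$-well-separatedness here is exactly that there are no medium clauses, so every occurrence belongs to either a short or a long clause, which is why $V = V(\psi^{<d}) \cup V(\psi^{>D})$ as noted in the statement.

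First I would record two occurrence counts, each obtained by summing clause arities in two ways. On the short side, since every clause in $\psi^{<d}$ has fewer than $d$ literals,
\[
  \sum_{y} |\psi^{<d}_y| \;=\; \sum_{C\in\psi^{<d}} |C| \;<\; d\cdot|\psi^{<d}| \;\le\; d\,m\,.
\]
On the long side, since $\hat{\psi} \subseteq \psi^{>D}$ forces every clause in $\hat{\psi}$ to have more than $D=\epsilon^{-4}d$ literals, and using the hypothesis $|\hat{\psi}| > \epsilon^2 m$,
\[
  \sum_{y} |\hat{\psi}_y| \;=\; \sum_{C\in\hat{\psi}}|C| \;>\; D\cdot|\hat{\psi}| \;=\; \epsilon^{-4} d\cdot|\hat{\psi}| \;>\; \epsilon^{-4} d \cdot \epsilon^2 m \;=\; \epsilon^{-2} d\,m\,.
\]
Both sums range over all variables, and the identities hold because summing $|\psi^{<d}_y|$ (resp.\ $|\hat{\psi}_y|$) over $y$ counts each short (resp.\ $\hat{\psi}$-) clause once per literal it contains.

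Combining these with the assumed per-variable inequality then yields the contradiction: summing $|\psi^{<d}_y| > \epsilon^2|\hat{\psi}_y|$ over all $y$ and substituting the two bounds gives
\[
  d\,m \;>\; \sum_y |\psi^{<d}_y| \;>\; \epsilon^2 \sum_y |\hat{\psi}_y| \;>\; \epsilon^2 \cdot \epsilon^{-2} d\,m \;=\; d\,m\,,
\]
which is absurd. Hence some variable $y$ must satisfy $|\psi^{<d}_y| \le \epsilon^2|\hat{\psi}_y|$, as claimed.

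I do not anticipate a serious obstacle here, so the hard part is really just bookkeeping: making sure that the factor $\epsilon^{-4}$ hidden in $D$ is precisely what is needed to absorb both the $\epsilon^2$ loss from the threshold $|\hat{\psi}|>\epsilon^2 m$ and the further $\epsilon^2$ loss from the per-variable inequality, and checking that the strict and non-strict inequalities line up so the final chain is genuinely contradictory rather than merely tight. If I wanted to be especially careful I would also note that variables occurring in $\psi^{<d}$ but not in $\hat{\psi}$ cause no trouble, since for such $y$ the assumed inequality $|\psi^{<d}_y| > \epsilon^2|\hat{\psi}_y| = 0$ is consistent and contributes only to the left-hand sum.
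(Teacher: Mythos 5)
Your proof is correct and is essentially identical to the paper's: both assume for contradiction that every variable $y$ satisfies $|\psi^{<d}_y| > \epsilon^2 |\hat{\psi}_y|$, sum this over all variables, and play the resulting lower bound on $\occ(\psi^{<d})$ against the two arity-based counts $\occ(\psi^{<d}) < dm$ and $\occ(\hat{\psi}) > D\epsilon^2 m = \epsilon^{-2}dm$ to reach the contradiction $dm > dm$. The only difference is presentational: you spell out the bookkeeping (summing arities, handling variables absent from $\hat{\psi}$) slightly more explicitly than the paper does.
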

That is, for every set $\hat\psi$ that contains a significant fraction of long 
clauses, there is a variable that occurs $\abs{\hat\psi_y}$ times in $\hat\psi$, 
but only at most an $\epsilon^2$-fraction of that in the short clauses.
\begin{proof}
  Let $\hat\psi\subseteq\psi^{>D}$ with $\abs{\hat\psi} > \epsilon^2m$.
  Note that the total number of literal occurrences in $\hat\psi$ is 
  $\occ(\hat{\psi}) > D \cdot \epsilon^2 \cdot m = \epsilon^{-2} d m$.
  In contrast, $\occ(\psi^{<d}) < d m$.
  Now suppose that there was no variable $y$ with the claimed properties, that 
  is, suppose that every variable $y$ satisfies $\abs{\psi^{<d}_y} > \epsilon^2 
  \abs{\hat\psi_y}$.
  Then the total number of variable occurrences in $\psi^{<d}$ can be bounded from below as 
  follows:
  \[
    \occ(\psi^{<d})=\sum_{y} \abs{\psi^{<d}_y} > \sum_{y} \epsilon^2 
    |\hat{\psi}_{y}| = \epsilon^2 \occ(\hat{\psi}) > d\cdot m\,.
  \]
  This yields a contradiction and thus proves the claim.
\end{proof}

\begin{proof}[Proof of Theorem~\ref{theorem: cw}]
The algorithm $A$ works as follows.
Let $\epsilon' = \epsilon^2$, and we assume w.l.o.g.\ that $\epsilon < 1/8$.
Given a CNF formula~$\phi$, we compute an $\epsilon'$-well separated formula 
$\psi$ by dropping all clauses in $\phi^{[d,D]}$.
Corollary \ref{corollary: middle} guarantees that the fraction of deleted clauses 
is bounded by $\epsilon'$.
If $\psi$ is not $\epsilon/2$-balanced, we discard the smaller side (with fewer clauses) 
and only handle the larger one:
If $\psi^{<d}$ is the larger side, we compute an optimal assignment for 
$\psi^{<d}$ in FPT time, by using the result of Gurski and Wanke~\cite{GW00}.
This way the total number of unsatisfied clauses is at most $\epsilon m/2$, and together with
the unsatisfied clauses due to applying Corollary~\ref{corollary: middle}, the total number of unsatisfied clauses is smaller than  $\epsilon m$. Since $\mathrm{OPT} > m/2$, we get the approximation guarantee.

If $\psi^{>D}$ is the larger side, we use a random assignment.
This way, at most $\epsilon m/2$ clauses from $\psi^{<d}$ are violated, and in expectation
at most a $2^{-D}$ fraction of clauses from $\psi^{>D}$ are violated. Since $2^{-D}$ is smaller than $\epsilon/4$,
we conclude that -- together with unsatisfied clauses due to applying Corollary~\ref{corollary: middle} -- at least 
$(1-\epsilon)m$ clauses are satisfied in expectation.

This finishes the analysis of unbalanced formulas, and in the remaining proof we may assume that 
$\psi$ is $\epsilon/2$-balanced.
To handle this case, we determine a set of variables $Y$ such that
\begin{itemize}
\item there are at most $\epsilon m/4$ short clauses with variables from $Y$ and
\item there are at most $\epsilon^2 m$ long clauses that contain $\le 1/\epsilon$ variables from $Y$.
\end{itemize}

Before we construct $Y$, let us verify that the properties of $Y$ imply the correctness of the theorem.
Our algorithm computes a satisfying assignment of the short clauses without variables from $Y$, again using the result
of Gurski and Wanke~\cite{GW00}. Afterwards it assigns values uniformly at random to the variables in $Y$.

There are at most $\epsilon' m = \epsilon^2 m$ unsatisfied clauses due to applying Corollary~\ref{corollary: middle},
$\epsilon m/4$ short clauses clauses that we did not consider when satisfying clauses from $\psi^{<d}$,
and $\epsilon^2 m$ clauses from $\psi^{>D}$ that we did not consider in the random assignment. Additionally,
in expectation there are less than $2^{-1/\epsilon}m$ clauses left unsatisfied from the 
remaining $|\psi^{>D}| - \epsilon^2 m$ clauses from $\psi^{>D}$. Since we assumed that $\epsilon < 1/8$, the theorem follows.

To construct the set $Y$, we iteratively apply Lemma~\ref{lem:sparse-left} with the parameter $\epsilon/4$. 
Initially, we set $\hat\psi = \psi^{>D}$. In each iteration, we identify a variable $y$ according to the lemma and add the variable to $Y$.
In the subsequent iterations, we mark $y$ to be inactive and handle it as if it was not contained in any clause.
Whenever we identify a clause $C$ that has at least $1/\epsilon$ inactive variables (i.\,e., variables from $Y$), 
we remove $C$ from $\hat\psi$.
We continue this process until $|\hat\psi| \le \epsilon^2$.
Note that applying Lemma~\ref{lem:sparse-left} for $\epsilon/4$ but having an $\epsilon'$-well separated formula ensures that at all times, 
all clauses in $\hat\psi$ have sufficiently many literals to apply Lemma~\ref{lem:sparse-left}. 
Therefore the process terminates and the generated set $Y$ has the aimed-for properties since $|Y| \le m/\epsilon$.
\end{proof}

\subsection{Hardness parameterized by neighborhood 
  diversity}\label{sec:nd_whard}

A \emph{constraint} on $r$ variables is a relation $R\subseteq\{0,1\}^r$.
We define the unary constraints $U_0=\{0\}$ and $U_1=\{1\}$, which corresponds 
to clauses $(\neg x)$ and $(x)$, respectively.
We define the equality $=$ and disequality $\neq$ constraints on two groups of 
Boolean variables $x = x_1 \dots x_n$ and $y= y_1\dots y_n$ in infix notation in 
the usual way:
For an assignment $a$ to the $x$- and $y$-variables, we say that $a \models x=y$ 
if and only if, for all $i\in[n]$, we have $a(x_i)=a(y_i)$, that is, the 
assignment sets $x_i$ to the same value as $y_i$; as usual, $x\neq y$ is 
interpreted as the negation of $x=y$.

\begin{lemma}\label{thm:CNFWhardness}
  CNF-SAT parameterized by $\p{\nd}$ is $\W[1]$-hard, where $\p{\nd}$ is the 
  neighborhood diversity of the input's incidence graph.
\end{lemma}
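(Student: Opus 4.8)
The plan is to give a parameterized reduction from \kmc{} (which is \W[1]-hard parameterized by the number $k$ of colors) to CNF-SAT, producing a formula whose incidence graph has neighborhood diversity bounded by a function of~$k$. The crucial observation I would exploit is that the incidence graph records only \emph{which} variables occur in a clause, not with which polarity; hence a single clause \emph{type} (module), whose neighborhood is a fixed block of variables, can realize arbitrarily many distinct clauses simply by varying the signs of its literals. This is exactly the slack that lets me compress the $\Theta(n^2)$ adjacency information of the input graph into only $\binom{k}{2}$ clause types.

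\emph{Construction.} Let $G$ be the input graph with color classes $V_1,\dots,V_k$; I first pad each class with isolated dummy vertices so that all $|V_i|$ equal a common power of two~$n$, and identify $V_i$ with $\{0,1\}^{\log n}$. For each color~$i$ I introduce a block $B_i$ of $\log n$ Boolean variables, whose assignment is read as the binary name of the chosen vertex $v_i\in V_i$; no ``exactly one'' gadget is needed, since every assignment names a legal vertex. For every pair $i<i'$ and every \emph{non-edge} $(a,b)$ with $a\in V_i$, $b\in V_{i'}$, I add a clause over $B_i\cup B_{i'}$ that is falsified by exactly the assignment naming $a$ and~$b$: in each coordinate~$t$ the clause uses the literal $\neg b_{i,t}$ if the corresponding bit of~$a$ is~$1$ and the literal $b_{i,t}$ otherwise (and likewise for $b$ and $B_{i'}$). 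Thus this clause asserts $\lnot(v_i{=}a \wedge v_{i'}{=}b)$, forbidding the non-edge. All non-edge clauses for the pair $(i,i')$ share the same neighborhood $B_i\cup B_{i'}$ and differ only in their signs, so together they form a single clause module $D_{i,i'}$.

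\emph{Bounding the parameter and correctness.} The partition $\{B_1,\dots,B_k\}\cup\{D_{i,i'}\}_{i<i'}$ witnesses $\nd(\G{\psi})\le k+\binom{k}{2}$: the graph is bipartite, so every block is an independent set; each variable $b_{i,t}$ is adjacent to exactly the clauses of $\bigcup_{i'\ne i}D_{i,i'}$, and each clause of $D_{i,i'}$ is adjacent to exactly $B_i\cup B_{i'}$, so the neighborhood condition holds for every block. For correctness, an assignment satisfies $\psi$ if and only if for every pair $(i,i')$ the named pair $(v_i,v_{i'})$ avoids all forbidden non-edges, i.e.\ is an edge of~$G$; since the $v_i$ lie in distinct color classes and dummies are non-adjacent to everything (hence never selectable), this holds exactly when $\{v_1,\dots,v_k\}$ is a multicolored clique. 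The reduction clearly runs in polynomial time and maps the parameter~$k$ to $g(k)=k+\binom{k}{2}$, as a parameterized reduction requires.

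\emph{Main obstacle.} The only genuine difficulty is the one resolved above: a naive encoding of adjacency (for instance a $2$-clause $(\lnot s_{i,a}\vee\lnot s_{i',b})$ per non-edge under a one-hot selection) produces $\Theta(n^2)$ \emph{distinct} neighborhoods and blows up the neighborhood diversity. Forcing all non-edge clauses of a color pair to share a single neighborhood is what dictates the two design choices: a \emph{binary} selection encoding (so that one clause can pin down a single forbidden pair, and so that no at-most-one gadget — which a fixed-neighborhood module cannot express — is needed) together with the sign-pattern freedom invisible to the incidence graph. I expect the verification of the neighborhood-diversity bound and the power-of-two padding to be routine; the conceptual crux is simply the realization that polarities, which the incidence graph ignores, carry all of the instance's combinatorial content.
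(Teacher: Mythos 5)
Your proposal is correct and follows essentially the same route as the paper's proof: a reduction from \kmc{} that encodes each color class by $\log n$ Boolean variables, adds one clause per non-edge excluding exactly the corresponding forbidden assignment, and bounds the neighborhood diversity by $k+\binom{k}{2}$ via the $k$ variable modules and $\binom{k}{2}$ clause modules. Your explicit sign-pattern construction of the clauses and the power-of-two padding are just spelled-out versions of steps the paper handles implicitly (``this constraint excludes exactly one of the $2^{2\log n}$ possible assignments'' and the assumption that $\log n$ is an integer).
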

\begin{proof}
  We devise an FPT-reduction from \kmc\ to CNF-SAT.
  Given a $k$-partite graph $G$, whose parts $V_1,\dots,V_k$ all have the same 
  size~$n$, we construct $k$ groups of variables $x_1,\dots,x_k$, which together 
  are supposed to represent a $k$-clique in $G$, should one exist.
  Each group $x_i$ consists of $\log n$ Boolean variables and represents the 
  supposed clique's vertex in the part $V_i$.
  Without loss of generality, we assume that $\log n$ is an integer.
  
  Starting from the empty CNF formula, we construct a formula $\phi$ on the 
  $x$-variables as follows.
  First choose, for each $i\in[k]$, an arbitrary bijection $\mathrm{bin}_i : V_i 
  \to \{0,1\}^{\log n}$ that maps any vertex $u\in V_i$ to its binary 
  representation $\mathrm{bin}(u)$; for convenience, we drop the index~$i$.
  For each $i,j\in[k]$ with $i < j$, and for each non-edge $(u,v)\not\in 
  E(V_i,V_j)$ between $V_i$ and $V_j$, we add the following constraint 
  $C_{i,j,u,v}$ to $\phi$:
  \[
    x_ix_j \neq \mathrm{bin}(u)\mathrm{bin}(v)\,.
  \]
  Clearly, this constraint excludes exactly one of the $2^{2\log n}$ possible 
  assignments to $x_ix_j$, and so it can be written as an OR of literals of the 
  $x$-variables.
  In the end, $\phi$ is a CNF formula with $|E(G)|$ clauses.

  For the completeness of the reduction, let $v_i\in V_i$ for all $i\in[k]$ be 
  such that $v_1,\dots,v_k$ is a clique in~$G$.
  For all $i\in[k]$, set $x_i = \mathrm{bin}(v_i)$.
  This assignment satisfies all constraints: for all $(u,v)\not\in E(V_i,V_j)$,
  we have that $\mathrm{bin}(v_i)\mathrm{bin}(v_j)\neq 
  \mathrm{bin}(u)\mathrm{bin}(v)$ because $(v_i,v_j)$ is an edge and $(u,v)$ is 
  not, and $\mathrm{bin}$ is a bijection.

  For the soundness of the reduction, let $a_1,\dots,a_k\in\{0,1\}^{k\log n}$ be 
  a satisfying assignment of $\phi$.
  For each $i\in[k]$, let $v_i$ be the unique vertex in $V_i$ for which 
  $\mathrm{bin}(v_i) = a_i$.
  Let $i,j\in[k]$ with $i < j$.
  Since the assignment satisfies all constraints of $\phi$, it must be the case 
  that $(v_i,v_j)$ is an edge in $G$.
  For if it was a non-edge, its corresponding constraint in $\phi$ would have 
  excluded the assignment $a_ia_j$ for $x_ix_j$.
  Hence $v_1,\dots,v_k$ is a clique in~$G$.

  It remains to argue that the neighborhood diversity of the incidence graph 
  of~$\phi$ is at most $k + \binom{k}{2}$.
  The modules of the incidence graph are the variable group $x_h$ for each 
  $h\in[k]$ and the clause group $\{C_{i,j,u,v}\}$ for each $i,j\in [k]$ with $i 
  < j$.
  Indeed, the incidence graph between $x_{h}$ and $C_{i,j,*,*}$ is a bipartite 
  clique if $h\in\{i,j\}$, and otherwise it is an independent set.

  We constructed an FPT-reduction from the W[1]-complete problem Multicolored 
  Clique to CNF-SAT parameterized by $\p{\nd}$, which finishes the proof of the 
  theorem.
\end{proof}

\section{From Treewidth to Clique-width}

In the previous section, we have seen that \MSAT\ is fixed-parameter tractable 
when parameterized by $\p{\tw}$, which is a sparse graph parameter, and it is 
W[1]-hard to compute exactly and has an $\FPTAS$ when parameterized by $\p{\cw}$, 
which is a dense graph parameter.
In this section we observe that the transition from sparse to dense parameters
has different effects on the complexity of \MCSP, depending on which types of 
constraints are allowed.

By modifying our reduction for \SAT\ we show that \MDNF, the problem of 
maximizing the number of satisfied \andc\ constraints is W[1]-hard parameterized by 
$\p{\nd}$.
Furthermore, because the maximum number of constraints that could be
satisfied in our reduction is also bounded by some function of the parameter,
we show that the problem does not have an $\FPTAS$ unless FPT=W[1].
Thus, while \MDNF\ is FPT parameterized by $\p{\tw}$, it does not even appear to 
have an FPT approximation scheme when parameterized by $\p{\nd}$.

\begin{theorem} \label{thm:dnf}
  Suppose that there exists an $\FPTAS$ which, given $\epsilon >0$ and an 
  instance $\phi$ of \MDNF, computes a $(1-\epsilon)$-approximate solution and 
  runs in time $f(\p{\nd},\epsilon)\cdot \poly(n)$, where $\p{\nd}$ is the 
  neighborhood diversity of the incidence graph of $\phi$.
  Then $\FPT = \W[1]$.
\end{theorem}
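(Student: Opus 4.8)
The plan is to give an \FPT-reduction from \kmc\ to \MDNF\ that opens a gap between the optimum values of yes- and no-instances, and then to argue that this gap is wide enough, relative to the parameter, that any \FPTAS\ would decide \kmc\ in \FPT\ time. Since \kmc\ is \W[1]-hard, this yields $\FPT=\W[1]$.

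First I would reuse almost verbatim the encoding of Lemma~\ref{thm:CNFWhardness}: given a $k$-partite graph $G$ with parts $V_1,\dots,V_k$ of size $n$ each, introduce $k$ groups $x_1,\dots,x_k$ of $\log n$ Boolean variables, where $x_i$ is meant to encode a vertex of $V_i$ via a fixed bijection $\mathrm{bin}\colon V_i\to\{0,1\}^{\log n}$. The only change from the \SAT\ construction is that I dualize it: instead of adding one \orc-clause per non-edge that forbids an assignment, I add, for each \emph{edge} $(u,v)\in E(V_i,V_j)$ with $i<j$, one \andc-term $T_{i,j,u,v}$ equal to the conjunction of the literals asserting $x_i=\mathrm{bin}(u)$ and $x_j=\mathrm{bin}(v)$. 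Thus $T_{i,j,u,v}$ is satisfied exactly when $x_i$ encodes $u$ and $x_j$ encodes $v$.

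The heart of the argument is the gap analysis. Given any assignment, each group $x_i$ encodes a unique vertex $v_i\in V_i$, and for each pair $i<j$ at most one term of the $(i,j)$-block can be satisfied, namely $T_{i,j,v_i,v_j}$, which happens precisely when $(v_i,v_j)\in E(G)$. Hence the number of satisfied terms equals the number of edges of $G$ inside $\{v_1,\dots,v_k\}$, which is at most $\binom{k}{2}$, with equality if and only if $\{v_1,\dots,v_k\}$ is a clique. Therefore the optimum is $\binom{k}{2}$ when $G$ has a multicolored clique and at most $\binom{k}{2}-1$ otherwise. The neighborhood-diversity bound $\p{\nd}\le k+\binom{k}{2}$ carries over unchanged, since the incidence graph has exactly the same module structure as in Lemma~\ref{thm:CNFWhardness}: the variable groups $x_h$ and the term blocks $\{T_{i,j,*,*}\}$ are the modules, with a bipartite clique between $x_h$ and block $(i,j)$ exactly when $h\in\{i,j\}$.

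Finally I would exploit that the optimum is capped by $\binom{k}{2}$, a function of the parameter alone. Running the hypothetical \FPTAS\ with accuracy $\epsilon=\epsilon(k)$ chosen so that $(1-\epsilon)\binom{k}{2}>\binom{k}{2}-1$ (any $\epsilon<1/\binom{k}{2}$ works) returns, on a yes-instance, an assignment satisfying strictly more than $\binom{k}{2}-1$ terms, hence exactly $\binom{k}{2}$ since the count is integral; on a no-instance no assignment reaches $\binom{k}{2}$. So I can decide \kmc\ by running the \FPTAS\ once and counting satisfied terms, in time $f(\p{\nd},\epsilon)\cdot\poly(n)=g(k)\cdot\poly(n)$, because both $\p{\nd}\le k+\binom{k}{2}$ and $\epsilon$ depend only on $k$. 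I expect the one genuinely delicate point to be precisely this last observation: the reduction bites against an \FPTAS\ only because the optimum is bounded by a function of $k$, so that shrinking $\epsilon$ to resolve the additive gap of $1$ still leaves $f(\p{\nd},\epsilon)$ a function of $k$ alone; everything else is inherited from the \SAT\ construction.
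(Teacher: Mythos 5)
Your proposal is correct and follows essentially the same route as the paper: the same dualized encoding of \kmc\ (one \andc-term per edge rather than one clause per non-edge), the same neighborhood-diversity bound $k+\binom{k}{2}$, and the same use of the fact that the optimum is capped by $\binom{k}{2}$ so that $\epsilon<1/\binom{k}{2}$ turns the \FPTAS\ into an exact \FPT\ decision procedure. Your explicit gap analysis (each assignment selects one vertex per part, so at most one term per block is satisfiable, and it is satisfiable iff the corresponding pair is an edge) is in fact slightly more detailed than the paper's soundness claim, which asserts the bound $\binom{k}{2}-1$ without spelling out this counting argument.
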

\begin{proof}
  We devise an \FPT-reduction from \kmc\ that is similar to the one in our proof 
  of Theorem~\ref{thm:CNFWhardness}.

  Given a $k$-partite graph $G$ whose parts $V_1,\dots,V_k$ have size $n$ each, 
  we construct $k$ groups $x_1,\dots,x_k$ of $\log n$ variables each, and 
  $\binom{k}{2}$ groups of \andc\ constraints $C_{i,j,u,v}$ for each integers 
  $i,j\in[k]$ with $i < j$ and edge $(u,v) \in E(V_i,V_j)$ between $V_i$ and 
  $V_j$:
  \[
    x_ix_j = \mathrm{bin}(u)\mathrm{bin}(v)
  \]
  Here, $\mathrm{bin}(u) \in \{0,1\}^{\log n}$ is some binary representation of 
  $u\in V_i$.
  Note that the constraint $C_{i,j,u,v}$ is satisfied by exactly one of the  
  $2^{2\log n}$ assignments to the variables $x_ix_j$, and so this constraint 
  can be written as an AND of literals of these variables.
  Apart from producing $\phi$, the reduction also sets the approximation 
  parameter $\epsilon = k^{-2}$ so that $(1-\epsilon) \binom{k}{2} > 
  \binom{k}{2} - 1$ holds.
  The neighborhood diversity of the incidence graph of $\phi$ is at most 
  $k+\binom{k}{2}$ by a completely analogous argument as in the proof of 
  Theorem~\ref{thm:CNFWhardness}.

  We now prove that an $\FPTAS$ for $\MDNF$ would allow us to distinguish 
  whether $G$ has a $k$-clique or not.
  For the completeness of the reduction, let $G$ have a $k$-clique 
  $v_1,\dots,v_k$.
  Then the assignment $x_1 \dots x_k = 
  \mathrm{bin}(v_1) \dots \mathrm{bin}(v_k)$ satisfies the $\binom{k}{2}$ 
  constraints $C_{i,j,v_i,v_j}$ for each $i,j\in[k]$ with $i<j$.
  Thus the assumed $\FPTAS$ will return a solution that satisfies at least 
  $(1-\epsilon)\binom{k}{2} > \binom{k}{2}-1$ constraints.
  Since the number of satisfied constraints is an integer, it must thus be at 
  least $\binom{k}{2}$ (and in fact is equal to $\binom{k}{2}$ in this case).
  For the soundness, if $G$ has no $k$-clique, then at most $\binom{k}{2} - 1$ 
  constraints can be simultaneously satisfied, and so the assumed $\FPTAS$ can 
  not return a solution whose value is larger than that.

  Finally, note that the overall algorithm above solves the Multicolor Clique 
  problem in time $f(\p{\nd},\epsilon) \poly(n) \le g(k)\poly(n)$, which is 
  \FPT; thus an $\FPTAS$ for $\MDNF$ would imply that $\FPT=\W[1]$.
\end{proof}

When parameterized by \p{\tw}, \MSAT\ and \MDNF\ are both \FPT, and when 
parameterized by a dense graph parameter, such as $\p{\nd}$, the former problem 
is hard but approximable while the latter problem is hard even to approximate.
We next consider natural constraint types where the corresponding \CSP s stay 
\FPT\ both for sparse as well as dense incidence graph parameters.
\MPAR\ wants to find an assignment that satisfies the maximum number of linear 
equations modulo two.
While deciding whether there is an assignment that satisfies all equations is in 
P (by Gauss elimination), the maximization version is a typical APX-hard 
problem~\cite{H01}.
Here we show that computing the optimal solution of \MPAR\ is FPT, regardless of 
whether the parameter is the treewidth or the clique-width of the incidence 
graph.
Our intuition for why \MPAR\ appears to be so much easier than \SAT\ is that 
negations are (almost) irrelevant, and so the incidence graph seems to capture 
most of the structure relevant to the complexity of the \CSP\ instance.

\begin{theorem} \label{thm:parity}
  Given an instance $\phi$ for \MPAR, we can find an optimal solution in time 
  $f(\p{\cw})|\phi|^{O(1)}$, where $\p{\cw}$ is the clique-width of the 
  incidence graph of $\phi$.
\end{theorem}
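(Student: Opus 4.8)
The plan is to reduce the problem to a clean algebraic form and then run a dynamic program over a clique-width expression of the incidence graph. First I would eliminate negations. If a \parc\ constraint $C$ has target bit $b_C$ and its negated literals correspond to a variable set $N$, then modulo $2$ the equation $\sum_{l\in C} t(l)\equiv b_C$ is equivalent to $\sum_{x\in\mathrm{vars}(C)} t(x)\equiv b_C+|N|\pmod 2$, because $t(\neg x)=1-t(x)\equiv 1+t(x)$. Thus, after replacing each target by the \emph{effective target} $b_C'=(b_C+|N|)\bmod 2$, every constraint becomes a parity equation over the \emph{unsigned} variables in its incidence-graph neighbourhood: $C$ is satisfied iff the number of its true neighbours has parity $b_C'$. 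This is precisely the sense in which negations are irrelevant, and it lets us work purely with the bipartite incidence graph $\G{\phi}$.

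Next I would invoke a standard algorithm that, from a graph of clique-width $\cw$, computes a $k'$-expression with $k'=f(\cw)$ labels, and perform a bottom-up DP on it. The key structural observation is that clique-width treats equally-labelled vertices uniformly: any edge-insertion $\eta_{i,j}$ applied later connects \emph{all} current label-$i$ vertices to the same future variables. Hence any two constraint-vertices that share a label at some subexpression receive exactly the same future neighbourhood, so the parity of a constraint's \emph{future} true neighbours is a single bit per label. This is what makes a small DP possible: I can defer the evaluation of every constraint to the root while summarising the unknown future by one bit per label. Concretely, for each subexpression $t$ (defining a graph $G_t$) I maintain a table indexed by $(\bar\sigma,\bar f)\in\{0,1\}^{k'}\times\{0,1\}^{k'}$, where $\sigma_i$ is the parity of the true \emph{variable}-vertices currently in label $i$, and $\bar f$ is a hypothesised vector of future contributions. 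The value $\mathrm{Val}_t(\bar\sigma,\bar f)$ is the maximum, over assignments whose per-label variable parities match $\bar\sigma$, of the number of constraints of $G_t$ that would be satisfied if every label-$i$ constraint additionally received future parity $f_i$. Since no future remains at the root, the optimum is $\max_{\bar\sigma}\mathrm{Val}_{\mathrm{root}}(\bar\sigma,\bar 0)$.

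The recurrences are routine once the state is fixed: a variable leaf in label $i$ offers the two signatures $\bar 0$ and $e_i$ with value $0$; a constraint leaf $C$ in label $i$ has no neighbours yet, so it is satisfied iff $f_i=b_C'$; disjoint union sets $\mathrm{Val}_t(\bar\sigma,\bar f)=\max_{\bar\sigma_1\oplus\bar\sigma_2=\bar\sigma}(\mathrm{Val}_{t_1}(\bar\sigma_1,\bar f)+\mathrm{Val}_{t_2}(\bar\sigma_2,\bar f))$, which is valid because the two sides share no edges and receive the same future $\bar f$; a relabelling $\rho_{i\to j}$ merges the two signature slots by XOR and forces $f_i,f_j$ to coincide; and, crucially, $\eta_{i,j}$ keeps $\bar\sigma$ fixed but shifts the future index, $\mathrm{Val}_t(\bar\sigma,\bar f)=\mathrm{Val}_{t'}(\bar\sigma,\bar f')$ with $f_i'=f_i\oplus\sigma_j$ and $f_j'=f_j\oplus\sigma_i$, reflecting that the contribution $\sigma_j$ (resp.\ $\sigma_i$) has now been realised rather than deferred. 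Each table has $4^{k'}$ entries and each operation is processed in time polynomial in the table size, so the whole algorithm runs in $f(\cw)\cdot|\phi|^{O(1)}$.

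I expect the main obstacle to be isolating exactly this state and proving the $\eta$-recurrence correct, i.e.\ showing by induction on the expression that for every constraint the quantity ``parity realised so far plus deferred bit $f_i$'' equals its true left-hand-side parity at the root once $\bar f=\bar 0$, and that bucketing constraints only by label and by whether $f_i$ matches their residual loses no information. Checking that this remains sound when a label class simultaneously contains variable- and constraint-vertices is the one place needing care, but it is harmless because only variable-vertices contribute to $\bar\sigma$, so the parity updates at each $\eta_{i,j}$ are unaffected.
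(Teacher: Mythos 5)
Your proposal is correct in substance but takes a genuinely different route from the paper. The paper performs the same preprocessing (folding negated literals into the right-hand side, so each constraint becomes a parity condition on its unsigned neighbourhood) and then simply invokes the meta-theorem of Courcelle, Makowsky, and Rotics~\cite{CMR00}: \MPAR\ is expressed in the optimization version of CMSO$_1$ as ``find the largest set $S$ of constraint-vertices for which there exists a set $X$ of variable-vertices such that every $\ell\in S$ marked $U_b$ has $|N(\ell)\cap X|\equiv b \pmod 2$,'' and the modular counting predicates of CMSO$_1$ do the rest. You replace this black box by an explicit dynamic program over a $k'$-expression; your key observation --- that vertices sharing a label have identical \emph{future} neighbourhoods, so one deferred parity bit per label suffices --- is exactly the structural fact that the meta-theorem exploits internally. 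What your route buys is a self-contained algorithm with a transparent $4^{k'}\cdot\poly$ table-based running time instead of the enormous hidden constants of the meta-theorem; what it costs is that you must verify the recurrences yourself (your leaf, union, and relabel rules are all correct as stated).

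One step needs patching: your rule for $\eta_{i,j}$, namely $f_i'=f_i\oplus\sigma_j$ and $f_j'=f_j\oplus\sigma_i$, is sound only if no label-$i$ vertex is already adjacent to a label-$j$ vertex when $\eta_{i,j}$ is applied. In a general (redundant) clique-width expression, $\eta_{i,j}$ adds no edge for already-adjacent pairs, yet your rule still flips the deferred parity by the full $\sigma_j$; worse, the correct flip can then differ between two constraints in the same label class, because their \emph{past} neighbourhoods may differ even though their future ones cannot, so no uniform per-label update exists. This is repaired by the standard normalization to an \emph{irredundant} expression (Courcelle and Olariu): every $k$-expression can be converted in polynomial time, without increasing the number of labels, into one in which each $\eta_{i,j}$ is applied only when the two label classes have no edges between them. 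With that one sentence added, your induction goes through and the argument is complete.
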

\begin{proof}
  We rely on the meta-theorem of \cite{CMR00} that all problem which can be 
  expressed in the optimization version of CMSO$_1$ can be solved exactly in 
  linear time.
  In the logic CMSO$_1$, we are allowed to express problems via first-order 
  formulas with additional second-order quantifiers that are only allowed to 
  quantify over subsets of the universe of the input structure, and with 
  counting constraints that stipulate the cardinalities of these sets modulo a 
  constant number.

  To express \MPAR\ in CMSO$_1$, observe that when given a linear equation 
  $\sum_i l_i = b$ over $\operatorname{GF}(2)$, where $b\in\{0,1\}$ and the 
  $l_i$ are literals (either $x_i$ or $1-x_i = 1+x_i$), we may view it 
  equivalently as a constraint of the form $\sum_i x_i = b'$, where all the 
  $x_i$ are variables and $b'=b$ if and only if the original constraint contains 
  an even number of negated literals on the left hand side.

  Having performed the above pre-processing we can now express our problem in
  CMSO$_1$.
  The structure we construct is a bipartite directed graph $G$ with the 
  bipartition $L\dotcup R$, which is represented by an edge relation $E$.
  For each linear equation $\sum_i x_i = b$, we introduce a vertex $\ell$ 
  in~$L$, for each variable we introduce a vertex~$x_i$ in~$R$, and we set 
  $E(\ell,x_i)$ if and only if $x_i$ appears in equation~$\ell$.
  Moreover, we have the unit constraints $U_0$ and $U_1$, and we set $U_b(\ell)$ 
  if and only if $b$ is the right-hand side of $\ell$.
  
  The CMSO$_1$ formula that we construct is looking for the largest set 
  $S\subseteq L$ of constraint-vertices such that there exists a set of 
  variables $X\subseteq R$ which satisfies the following: every vertex $\ell\in 
  S$ where $U_b(\ell)$ holds has a number of neighbors in $X$ that is equal 
  to~$b$ modulo two, for every $b\in\{0,1\}$.
  By construction, the maximum such set~$S$ corresponds to the maximum set of 
  linear equations that can be satisfied simultaneously by an assignment 
  represented by~$X$.
\end{proof}

\section{Majority and Threshold CSPs}

In this section we deal with CSPs where each constraint is a \MAJ\ or a \THR\
constraint.
In this problem, each constraint is supplied with an integer value
$t$ (the threshold) and it is satisfied if and only if at least $t$ of its
literals are set to true. \MAJ\ is the special case of this predicate where $t$
is always equal to half the arity of each constraint.
We denote the resulting satisfiability problem with \MAJ\ and \THR, and the 
resulting \MCSP s with \MMAJ\ and \MTHR.

\MAJ\ and \THR\ constraints are of course some of the most natural and
well-studied predicates in many contexts: for example, \MCSP\ for such
constraints contains the complexity of finding an assignment that satisfies as
many inequalities as possible in a 0-1 Integer Linear Program whose
coefficients are in $\{-1,0,1\}$.  This problem, sometimes called
\textsc{Maximum Feasible Subset} has been well-studied in the literature
\cite{ERRS09,AK98,AK95}. \MAJ\ constraints also play a central role in learning theory \cite{FGRW12,GR09}
and in hardness of approximation \cite{DMN13}.

\subsection{Hardness of exact algorithms}

We consider the problem whether a CSP with \THR\ constraints is satisfiable.
This problem is \NP-complete.
We parameterize the problem by the size $\p{\fvs}$ of the smallest feedback 
vertex set, or by the neighborhood diversity $\p{\nd}$ of the instance's 
incidence graph.
As we will see, these parameterized problems turn out to be hard, even for the 
special case of \MAJ\ constraints.
Thus, neither dense nor sparse incidence graph parameters appear to put the 
problem in \FPT.

In order to ease notation in the upcoming proofs, we note that \THR-constraints 
are quite expressive.
For example, they can express clauses
$\ell_1 \vee\dots\vee \ell_d$
since this is the same as requiring that at least one of the $d$ literals is 
true.
Similarly, stipulating that at least $d$ literals be true is the same as a term 
$\ell_1\wedge\dots\wedge\ell_d$.
Finally, we can stipulate AT-MOST-ONE$(\ell_1,\dots,\ell_d)$, that is, that at 
most one of the literals is set to true, by using the \THR-constraint that at 
least $d-1$ of the literals $\neg\ell_1,\dots,\neg\ell_d$ are true.
\begin{theorem}
  \label{thm:thrhard}
  \THR\ parameterized by $\p{\fvs}$ is $\W[1]$-hard.
\end{theorem}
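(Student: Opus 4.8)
The plan is to give an \FPT-reduction from \kmc, reusing the overall template of the proof of Lemma~\ref{thm:CNFWhardness} but exploiting the counting power of \THR-constraints that the plain clauses of \SAT\ lack. Given a $k$-partite graph $G$ with parts $V_1,\dots,V_k$ of size $n$, I would introduce one-hot \emph{selection} variables $s_{i,u}$ for each $i\in[k]$ and $u\in V_i$, with the intended meaning that $s_{i,u}=1$ encodes that $u$ is the clique vertex chosen from $V_i$. Using the expressiveness of \THR\ recorded just before the statement, I would force exactly one selection per part by combining a clause $\bigvee_{u\in V_i}s_{i,u}$ (at least one) with an \textup{AT-MOST-ONE}$(s_{i,1},\dots,s_{i,n})$ constraint. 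Each of these is a single high-arity constraint, so in the incidence graph it is a star over the $n$ variables of part~$i$; a single star is acyclic, which is exactly the feature I want to preserve throughout.

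The essential point, and the reason a new idea beyond the $\p{\nd}$-reductions is needed, is that \SAT\ is \FPT\ parameterized by $\p{\fvs}$, so neither the binary encoding of Lemma~\ref{thm:CNFWhardness} nor a naive per-non-edge exclusion can work here: both reproduce the (potentially dense and highly cyclic) structure of $G$ inside the incidence graph and hence blow up $\p{\fvs}$ well beyond any function of $k$. The whole design must therefore ensure that every cyclic interaction is confined to a set $F$ of only $f(k)$ vertices whose deletion leaves a forest. My plan is to let $F$ consist of $O(k^2)$ \emph{counting} constraints — one or a constant number per pair $(i,j)$ with $i<j$, together with the at-least-one clauses — each realized as a single \THR-constraint (a star, acyclic in isolation) that reads selection variables and certifies the pairwise adjacency of the chosen vertices via a cardinality test, rather than through the dense edge-by-edge structure of $G$. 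After deleting these $O(k^2)$ hub constraints, every remaining selection variable $s_{i,u}$ should sit in only its part-$i$ selection star, so that the incidence graph collapses to $k$ disjoint stars, i.e.\ a forest; this yields $\p{\fvs}\le O(k^2)=f(k)$, as a parameterized reduction requires.

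The hard part — and what I expect to be the main obstacle — is precisely the construction of this adjacency-verification gadget under the $\p{\fvs}\le f(k)$ budget. A single \THR-constraint over a one-hot input is a symmetric function of its literals, so it can only express ``rectangle'' conditions $A\times B$ on the selected pair, whereas the bipartite adjacency relation between $V_i$ and $V_j$ is neither symmetric nor expressible by a bounded number of such rectangles. Overcoming this will require routing the adjacency information so that the only non-tree vertices are the $O(k^2)$ hub constraints, exploiting that a \THR-constraint can impose a cardinality condition over arbitrarily many variables through one acyclic constraint vertex; designing the gadget and verifying that its removal truly leaves a forest is the technical heart of the argument.

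Once the gadget and the bound $\p{\fvs}\le f(k)$ are in place, completeness and soundness follow the now-standard pattern: a $k$-clique $v_1,\dots,v_k$ induces the satisfying one-hot assignment $s_{i,v_i}=1$, and conversely any satisfying assignment is one-hot per part and passes all $\binom{k}{2}$ adjacency checks, hence picks out a $k$-clique. Since the construction is computable in \FPT\ time and the new parameter is bounded by a function of $k$, this is a valid parameterized reduction from the \W[1]-complete \kmc, establishing that \THR\ parameterized by $\p{\fvs}$ is \W[1]-hard.
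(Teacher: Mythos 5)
Your high-level plan --- reduce from \kmc\ and confine all cyclic interactions to $O(k^2)$ hub \THR-constraints --- matches the paper's strategy, and your observation that \SAT\ is \FPT\ parameterized by $\p{\fvs}$ (so the counting power of \THR\ must be exploited) is correct and is indeed what drives the paper's construction. But the proposal stops exactly where the proof has to start: you leave the adjacency-verification gadget unconstructed, and your own ``rectangle'' argument shows that the architecture you committed to cannot be completed. If the only variables are one-hot vertex selectors $s_{i,u}$ and the only non-star constraints are $f(k)$ hubs reading those selectors, then each hub, evaluated on a one-hot assignment, depends only on a partition of $V_i$ and $V_j$ into at most three classes each (positive literal, negative literal, absent), so $f(k)$ hubs can only accept a union of at most $9^{f(k)}$ combinatorial rectangles in $V_i\times V_j$ --- a bound independent of $n$, which the edge relation of a general bipartite graph does not satisfy. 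So this is not a fixable presentation gap; the skeleton itself is a dead end, and the theorem remains unproved.

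The paper escapes this obstruction with three ideas absent from your proposal. First, it introduces \emph{edge-selection} gadgets: for each pair $(i,j)$, a one-hot choice over $E_{ij}$, which replaces the single adjacency test by two \emph{incidence} tests (does the chosen edge touch the chosen vertex of $V_i$, and that of $V_j$?). Second, the $V_i$-side incidence test is done by per-edge arity-2 clauses $(p_k \vee \neg q_e)$; these are many constraints, not hubs, but each edge gadget attaches to exactly \emph{one} vertex gadget (its left endpoint), so after deleting the hubs each component is a subdivided star --- still a forest. This is the nuance your proposal misses when it dismisses all per-edge constraints as fvs-blowing: they are harmless when used on one side only. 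Third, and crucially, vertices and edges are encoded in \emph{unary}: the gadget for vertex $\ell \in V_j$ is a chain of $\ell$ variables, the gadget for edge $e=(k,\ell)$ a chain of $n+1-\ell$ variables, so the $V_j$-side incidence test becomes a pure cardinality condition --- the selected chains together contain exactly $n+1$ true variables if and only if the edge's right endpoint equals the selected vertex --- checkable by two hub \THR-constraints (AT-LEAST${}_{n+1}$ and AT-MOST${}_{n+1}$) per pair. This combination of edge selection, unary length encoding, and the asymmetric left/right treatment is precisely the routing mechanism you identified as necessary but did not supply.
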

\begin{proof}
We reduce from the \kmc\ problem.
Let $G(V_1,\ldots , V_k,E)$ be a $k$-partite graph with $\abs{V_i}=n$ for each 
$i\in\{1,\dots,k\}$, and we regard the vertices of $V_i$ to be identified with 
integers from 1 to $n$.
Let $E_{ij}$ be the edge set between $V_i$ and $V_j$ for every $1\leq i<j\leq 
k$.
We construct the output formula $\phi$ of $\THR$ using the following gadgets for 
every $i$ and every pair $1\leq i<j\leq k$:

\begin{itemize}
\item {\em Vertex selection gadget for $V_i$:} For each vertex $\ell \in V_i$, 
  we create a sequence of $\ell$ variables $P_{\ell}$ for $\phi$ and name the 
  first and the last variable in the sequence $p_{\ell}$ and $p'_{\ell}$ 
  respectively (in particular, if $\ell=1$, we have $p_1=p'_1$).
  For every two consecutive variables $y,z \in P_{\ell}$, we add an 
  OR-constraint $C=(y\vee \neg{z})$.
  These constraints guarantee that, in any satisfying assignment and for every 
  $\ell$, if $p'_\ell$ is set to true, then all variables in $P_\ell$, including 
  $p_\ell$, are set to true as well.

  Given the $n$ variable sets $P_1,\ldots ,P_n$, we add two constraints $X_i$ 
  and $Y_i$ as follows:
  \begin{align*}
    X_i=\text{AT-MOST-ONE}(p_\ell : \ell \in V_i)
    \quad \text{and} \quad
    Y_i=\text{AT-LEAST-ONE}(p'_\ell :\ell \in V_i)
    \,.
  \end{align*}
  The constraint $Y_i$ enforces at least one $p'_{\ell}$ is set to true, which 
  propagates through all variables in $P_\ell$ to $p_{\ell}$.
  Conversely, the constraint $X_i$ requires that at most one $p_{\ell}$ be set 
  to true.
  Hence, any satisfying assignment will have exactly one $\ell\in V_i$ for which 
  $p_{\ell}$ is set to true; moreover, all variables in the set $P_{\ell}$ are 
  set to true, and all the variables in $P_{\ell'}$ for $\ell'\in 
  V_i\setminus\{\ell\}$ are set to false.

\item {\em Edge selection gadget for $E_{ij}$:}
  The gadget for $E_{ij}$ is similar to that for $V_i$.
  For each edge $e=(k,\ell) \in E_{ij}$ with $k\in V_i$ and $\ell \in V_j$, we 
  create a sequence of $n+1-\ell$ variables $Q_{e}$ and name the first and the 
  last variable in the sequence $q_{e}$ and $q'_e$ respectively.
  For every two consecutive variables $y,z \in Q_{e}$, we add an OR-constraint 
  $C=(y\vee \neg{z})$.
  Given the variable sets $Q_e$ for each $e\in E_{ij}$, we add two constraints 
  $X_{ij}$ and $Y_{ij}$ as follows:
  \begin{align*}
    X_{ij}=\text{AT-MOST-ONE}(q_e : e \in E_{ij})
    \quad \text{and} \quad
    Y_{ij}=\text{AT-LEAST-ONE}(q'_e :e \in E_{ij})
    \,.
  \end{align*}
  By the same argument as for the vertex selection gadget, we have exactly one 
  variable $q_e$ set to true, in which case all variables of $Q_e$ are set to 
  true and all variables in $Q_{e'}$ are set to false for all $e'\neq e$.

\item {\em Incidence verification gadget between $V_i$ and $E_{ij}$:}
  For every edge $e=k\ell \in E_{ij}$ with $k\in V_i$,  we create an 
  OR-constraint $C_{ke}=(p_k\vee \neg{q_{e}})$.
  This guarantees that any satisfying assignment that sets $q_e$ to true also 
  sets $p_k$ to true.

\item {\em Incidence verification gadget between $V_j$ and $E_{ij}$:}
  We add two constraints $C_{ij}$, $C'_{ij}$ and their $t$-values as
  \begin{align*}
    C_{ij}
    &=
    \text{AT-LEAST}_{n+1}
    \paren*{
      \bigcup_{e\in E_{ij}} Q_e \cup \bigcup_{\ell\in V_j} P_\ell
    }\,, \text{and}\\
    C'_{ij}
    &=
    \text{AT-MOST}_{n+1}
    \paren*{
      \bigcup_{e\in E_{ij}} Q_e \cup \bigcup_{\ell\in V_j} P_\ell
    }\,.
  \end{align*}
  Any satisfying assignment will thus set exactly $n+1$ variables to true among 
  all the variables in the $Q_e$ sets for $e\in E_{ij}$ and the $P_\ell$ sets 
  for $\ell\in V_j$.
  In particular, there must be a natural number $\ell$ such that $\ell$ of the 
  true variables are in the $P$-sets, and $n+1-\ell$ of the true variables are 
  in the $Q$-sets.
  By the constraints on the $P$-variables, the set of true variables is equal to 
  exactly one of the sets $P_{\ell'}$.
  Since $\abs{P_{\ell'}} = \ell'$ for all $\ell'\in\{1,\dots,n\}$, we have that 
  $\ell>0$ and that the selected set is exactly $P_\ell$.
  Moreover, exactly one of the sets $Q_e$ is selected due to the constraints on 
  the $Q_e$-variables.
  Since $n+1 = |Q_e| + |P_\ell| = |Q_e| + \ell$ holds if and only if $e=k\ell$ 
  for some for some $k\in V_i$, the selected set $Q_e$ must correspond to an 
  edge~$e$ incident to $\ell\in V_j$.
\end{itemize}

For the completeness of the reduction, let $S$ be a $k$-clique of $G$. Then we 
set all $\ell$ variables of $P_{\ell}$ from the vertex selection gadget for 
$V_i$ to true if $\ell\in V_i$ is chosen for the $k$-clique $S$. Also we set all 
$n+1-\ell$ variables of $Q_e$, where $e=(k,\ell)$,  from the edge selection 
gadget for $E_{ij}$ to true if $k\in V_i$ and $\ell \in V_j$ are chosen for $S$.  
It is not difficult to check that this assignment satisfies $\phi$.

Conversely, suppose $\phi$ has a satisfying assignment. From the above argument, 
for each $V_i$ there is exactly one variable set to true among all $p_{\ell}$'s 
for  $1\leq \ell\leq n$. Likewise, for each $E_{ij}$, exactly one variable $q_e$ 
is {\em selected}. From the property of the incidence verification gadget 
between $V_i$ and $E_{ij}$, whenever $q_e$ from the edge selection gadget for 
$E_{ij}$ is {\em selected}, $p_k$ such that $k$ is the vertex in $V_i$ incident 
with $e$ in $G$ must be set to true. Lastly, by the property of incidence 
verification gadget between $V_j$ and $E_{ij}$, that $q_e$ with $e=(k,\ell)$ is 
selected implies that exactly $\ell$ variables are set to true in the vertex 
selection gadget for $V_j$. From the property of the vertex selection gadget, 
this means that $P_{\ell}$ is selected and thus $p_{\ell}$ is set to true.  
Hence, for every $e=(k,\ell)\in E_{ij}$, whenever $q_e$ is set to true, both 
$p_k$ and $p_{\ell}$ are set to true. By selecting the vertices of $G$ 
corresponding to $k\in V_i$ such that $p_k$ is set to true, we find a $k$-clique 
of $G$. 

Consider the set of constraints \[F=\{X_i,Y_i:1\leq i\leq k\}\cup 
  \{X_{ij},Y_{ij},C_{ij},C'_{ij}:1\leq i<j\leq k\}.\] Consider the graph 
obtained by deleting $F$ from the incidence graph of $\phi$. Each component 
involves a variable set $P_k$,  for some vertex $k \in V_i$, and $Q_e$ for all 
edges $e=(k,\ell)$ such that $\ell \in V_j$ for some $j>i$. It is easy to see 
that each component is in fact a graph obtained by subdividing edges of a star. 
Hence the incidence graph of $\phi$ has a feedback vertex set of size $O(k^2)$. 
This completes the proof.
\end{proof}

\begin{theorem}\label{thm:majhard}
  $\MAJ$ parameterized by $\p{\fvs}$ is $\W[1]$-hard, where $\p{\fvs}$ is the 
  minimal size of a feedback vertex set of the instance's incidence graph.
\end{theorem}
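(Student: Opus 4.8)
The plan is to reuse verbatim the reduction from \kmc\ constructed in the proof of Theorem~\ref{thm:thrhard}, and to replace every \THR-constraint occurring there by a logically equivalent \MAJ-constraint, while checking that this replacement keeps the feedback vertex set of the incidence graph bounded by $O(k^2)$. The first ingredient is a generic padding gadget: a \THR-constraint demanding that at least $t$ of the literals $\ell_1,\dots,\ell_d$ be true can be rewritten as a majority constraint by adding dummy literals forced to constants. If $t\ge d/2$, I would add $2t-d$ fresh variables forced to false and include them in the constraint; the resulting constraint has arity $2t$, its majority threshold is exactly $t$, and the forced-false dummies never contribute, so it demands at least $t$ true literals among $\ell_1,\dots,\ell_d$. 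If instead $t\le d/2$, I would add $d-2t$ fresh variables forced to true, obtaining an arity-$2(d-t)$ constraint whose majority threshold is $d-t$; since $d-2t$ of the true literals are dummies, this again forces at least $t$ of the original literals to be true. In both regimes the padded arity is even, so majority is exactly half and no parity correction is needed. A fresh variable $z$ is forced to a constant by a unary \MAJ-constraint $(z)$ or $(\neg z)$, since majority over a single literal simply asserts that literal.

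The second ingredient is the observation of \emph{which} constraints need this treatment. Every binary OR-constraint $(y\vee\neg z)$ used inside the vertex- and edge-selection gadgets, as well as every incidence constraint $C_{ke}=(p_k\vee\neg q_e)$, is \emph{already} a \MAJ-constraint, because majority over two literals is precisely ``at least one of the two is true''. Hence these require no padding and are left untouched; crucially, the incidence constraints $C_{ke}$, which sit inside the subdivided-star components and \emph{outside} the feedback vertex set, are unchanged. The only genuinely higher-threshold constraints are $X_i,Y_i$ (the AT-MOST-ONE and AT-LEAST-ONE constraints of the vertex gadgets), $X_{ij},Y_{ij}$, and $C_{ij},C'_{ij}$ (the AT-LEAST$_{n+1}$ and AT-MOST$_{n+1}$ constraints) --- and these are exactly the constraints forming the deleted set $F$ in the proof of Theorem~\ref{thm:thrhard}.

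First I would apply the padding gadget to each constraint in $F$, attaching to every such constraint a private collection of padding variables, each of which is in turn joined to its own degree-one unary forcing constraint. In the incidence graph this produces, for each padded $c\in F$, a family of length-two pendant paths $c - z - D$, where $D$ is the forcing constraint of the padding variable $z$. Completeness and soundness are then inherited directly from Theorem~\ref{thm:thrhard}, since each padded majority constraint is logically equivalent to the threshold constraint it replaces and the OR-constraints are unaltered. It remains only to bound the feedback vertex set: deleting the same set $F$ removes all padded constraints, leaving exactly the original subdivided stars together with the now-pendant edges $z - D$, which are acyclic. Thus $F$ is still a feedback vertex set and $\abs{F}=O(k^2)$, proving that \MAJ\ parameterized by $\p{\fvs}$ is $\W[1]$-hard.

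The step I expect to demand the most care is the bookkeeping for the padding, namely verifying the two threshold regimes so that the majority threshold of each padded constraint lands exactly on the intended value $t$, and confirming that the padding variables --- being attached only to constraints of $F$ and to their own private forcing constraints --- can never create a new cycle once $F$ is removed. Everything else, and in particular the correctness of the clique encoding, transfers without change from the proof of Theorem~\ref{thm:thrhard}.
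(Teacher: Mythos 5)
Your proposal is correct and is essentially the paper's own argument: in both, the key idea is to convert each \THR-constraint into a \MAJ-constraint by padding it with fresh dummy variables forced to constants via unit constraints, and to observe that this padding only attaches pendant paths to constraint vertices of the incidence graph, so the feedback vertex set (the set $F$ of size $O(k^2)$ from Theorem~\ref{thm:thrhard}) still works. The only cosmetic difference is that the paper packages the padding as a generic, instance-independent \FPT-reduction from \THR\ to \MAJ\ that preserves $\p{\fvs}$ exactly and then invokes Theorem~\ref{thm:thrhard} as a black box, whereas you inline the padding into that specific instance, noting that the binary OR-constraints are already majorities --- a case the generic reduction also covers, since for them the threshold equals half the arity and no dummies are added.
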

\begin{proof}
  We devise an \FPT-reduction from \THR\ to \MAJ\ that keeps $\p{\fvs}$ the 
  same.
  The result then follows from Theorem~\ref{thm:thrhard}.

  The reduction's input is a CSP $\phi$ with threshold constraints, and the 
  output is a CSP~$\phi'$ with majority constraints.
  We transform threshold into majority constraints by adding fresh variables, 
  some of which are forced to be either true or false.

  Without loss of generality, we may assume that the arity of each constraint $C$ in $\phi$ is even; if $C$ has odd number of literals, we add a dummy variable $y$ to $C$ as a positive literal and add a new constraint $(\neg y)$ with threshold equal to 1. Notice that this does not change $\p(\fvs)$. Clearly, the output instance is satisfiable if and only if the original instance is satisfiable.

  For each constraint $C$ that has the threshold
  $t(C) =\sfrac{|C|}{2} + d$ for some integer~$d$, we 
  add $2|d|$ fresh dummy variables $y_1,\dots,y_{2|d|}$ as positive literals to obtain a constraint $C'$ 
  whose threshold we set to $t(C')=\sfrac{|C'|}{2}=\sfrac{|C|}{2}+|d|$; 
  thus, $C'$ is a majority constraint, and we include it into $\phi'$.
  If $d>0$, we additionally add the constraint $(\neg y_i)$ for each $i\in[d]$ 
  to force the dummy variables to be false. If $d<0$, we add the constraints $(y_i)$ for each  $i\in[d]$ 
  to force the dummy variables to be true.

  For the correctness of the reduction, note that any satisfying assignment 
  of~$\phi'$ sets any dummy variable $y$ to false if $\phi'$ contains the 
  constraint $(\neg y)$ since $y$ occurs exactly once as a positive literal. Likewise, any satisfying assignment of~$\phi'$ sets any dummy variable $y$ to true if $\phi'$ contains the 
  constraint $(y)$.
  Thus, any assignment to the old variables satisfies a constraint $C$ if and 
  only if setting the $y$-variables as just specified sets at least half of the 
  literals in $C'$ to true.
  Thus $\phi$ is satisfiable if and only if $\phi'$ is.

  Adding new variables to a constraint corresponds to adding leaves to the 
  corresponding constraint vertex in the incidence graph, which does not create 
  any new cycles.
  Adding a single unit clause for a $y$-variable also does not create new cycles 
  as this just adds a leaf to some leaf.
  Thus $\p{\fvs}(\phi)=\p{\fvs}(\phi')$, and our reduction has the claimed 
  properties.
\end{proof}

\begin{theorem}\label{thm:majndhard}
  \THR\ and \MAJ\ parameterized by the incidence neighborhood diversity 
  $\p{\nd}$ are $\W[1]$-hard.
\end{theorem}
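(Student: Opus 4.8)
The plan is to treat \MAJ\ as the substantive case and obtain \THR\ essentially for free. Note first that every instance of \MAJ\ is also an instance of \THR\ with the very same incidence graph, since a majority constraint is just a threshold constraint whose threshold happens to equal half the arity; consequently a $\W[1]$-hardness proof for \MAJ\ parameterized by $\p{\nd}$ immediately transfers to \THR. (Alternatively, \THR-hardness is already immediate from Lemma~\ref{thm:CNFWhardness}, because a clause $\ell_1 \vee \dots \vee \ell_d$ is literally the threshold constraint ``at least $1$ of the $d$ literals is true'', and reading a \SAT\ instance as a \THR\ instance leaves the incidence graph, hence $\p{\nd}$, unchanged.) So I would focus on constructing an \FPT-reduction from \kmc\ to \MAJ\ whose output has incidence $\p{\nd}$ bounded by a function of~$k$.

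I would reuse the encoding from the proof of Lemma~\ref{thm:CNFWhardness}: given a $k$-partite graph $G$ with parts $V_1,\dots,V_k$ of size~$n$, introduce groups $x_1,\dots,x_k$ of $\log n$ variables, where $x_i$ encodes via $\mathrm{bin}$ the supposed clique vertex in $V_i$, and for each pair $i<j$ and each non-edge $(u,v)\notin E(V_i,V_j)$ take the disequality $x_ix_j \neq \mathrm{bin}(u)\mathrm{bin}(v)$, which as before is an \orc\ of exactly $2\log n$ literals. The new ingredient is to convert each such clause into a majority constraint by padding. For each pair $i<j$ I would introduce one \emph{shared} module $D_{ij}$ of $2\log n - 2$ fresh dummy variables and add all of them as positive literals to every constraint of that pair. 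The resulting constraint $C'_{i,j,u,v}$ then has arity $4\log n - 2$, and I set its threshold to $2\log n - 1$, exactly half, so that it is a genuine \majc\ constraint.

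The key observation, which also removes any need to force the dummy variables to fixed values, is the following accounting. The number of true original literals in $C'_{i,j,u,v}$ is precisely the Hamming distance between the current assignment to $x_ix_j$ and $\mathrm{bin}(u)\mathrm{bin}(v)$, so $C'_{i,j,u,v}$ is satisfied iff this distance plus the number of true dummies in $D_{ij}$ is at least $2\log n - 1$. Since $\abs{D_{ij}} = 2\log n - 2$, setting all dummies true reduces the requirement to ``distance at least $1$'', i.e.\ to the original disequality, whereas setting some dummy false only makes the requirement on the originals \emph{stronger}. Hence for completeness I set $x_i = \mathrm{bin}(v_i)$ for a clique $v_1,\dots,v_k$ and all dummies true, satisfying every constraint exactly as in Lemma~\ref{thm:CNFWhardness}; for soundness, any satisfying assignment forces, for every pair $(i,j)$ and every non-edge $(u,v)$, the distance to be at least $1$ regardless of the dummy values, so the vertices that $x_1,\dots,x_k$ encode are pairwise adjacent and form a $k$-clique.

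It remains to bound the neighborhood diversity. The modules of the incidence graph are the $k$ variable groups $x_h$, the $\binom{k}{2}$ dummy groups $D_{ij}$, and the $\binom{k}{2}$ constraint groups $\{C'_{i,j,u,v}\}$: each constraint of pair $(i,j)$ is adjacent to exactly $x_i\cup x_j\cup D_{ij}$, each vertex of $D_{ij}$ to exactly the constraints of pair $(i,j)$, and each vertex of $x_h$ to exactly the constraints of the pairs containing~$h$; all these groups are independent sets. Thus $\p{\nd} \le k + 2\binom{k}{2} = k^2$, bounded by the parameter, so the reduction is an \FPT-reduction and we are done. I expect the only genuinely delicate point to be the soundness accounting above, namely verifying that leaving the dummies unforced can never accidentally satisfy a disequality constraint; once the monotonicity ``fewer true dummies $\Rightarrow$ stronger requirement'' is pinned down, the rest is routine.
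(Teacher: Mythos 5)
Your proposal is correct, and its core is the same padding trick the paper uses: convert each module of OR-constraints into \MAJ-constraints by adding one shared group of (roughly arity-many) fresh dummy variables per module, so that the number of modules, and hence $\p{\nd}$, stays bounded in~$k$; your handling of \THR\ (clauses are threshold-$1$ constraints, so Lemma~\ref{thm:CNFWhardness} applies verbatim) is identical to the paper's. The differences are in the packaging. The paper gives a \emph{generic} reduction from \SAT\ parameterized by $\p{\nd}$ to \MAJ: for each of the $t\le k$ constraint modules $G_i$, all of arity $a_i$, it adds a module $Z_i$ of $a_i-1$ dummies to every constraint of $G_i$ and pins them to true with unit constraints $(z)$, yielding $\p{\nd}\le 2k$; composed with Lemma~\ref{thm:CNFWhardness} this finishes the proof, and it is modular in the sense that any future hardness result for \SAT\ under $\p{\nd}$ transfers to \MAJ\ automatically. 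You instead inline the \kmc\ encoding of Lemma~\ref{thm:CNFWhardness} and pad its disequality clauses directly, and you dispense with the forcing constraints via the monotonicity observation that unforced dummies can only make the requirement on the original literals stronger (so soundness needs no control over the dummies, and completeness only uses the all-true setting). Both arguments are sound; yours produces a slightly leaner instance with no unit constraints, while the paper's is reusable as a black-box \SAT-to-\MAJ\ reduction. Incidentally, your monotonicity point applies equally to the paper's construction and shows that its forcing constraints are not strictly necessary either.
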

\begin{proof}
  First note that OR-constraints are special cases of \THR-constraints, and so 
  the hardness of \THR\ follows from Lemma~\ref{thm:CNFWhardness}, where we establish 
  the hardness of \SAT.
  Next we reduce from \SAT\ to \MAJ\ so that the the neighborhood 
  diversity of the new instance is linearly bounded by that of the original instance.
  For this, let $\varphi$ be a CNF-formula and consider its incidence graph with 
  neighborhood diversity~$k$.
  Let $G_1,\dots,G_t$ be the $t\le k$ modules corresponding to constraints of 
  $\varphi$.
  That is, every constraint in $G_i$ depends on the same set of vertices, and in 
  particular, every constraint in $G_i$ has the same arity~$a_i$.
  Now for each $i$, we add a set~$Z_i$ of $a_i-1$ fresh dummy variables as well 
  as the constraints $(z)$ forcing all $z\in Z_i$ to true.
  We convert each constraint of $G_i$ to a \MAJ-constraint by adding the 
  variable set $Z_i$ to every constraint in $G_i$.
  Since the $z\in Z_i$ are forced to true, each constraint always contains 
  $a_i-1$ variables set to true among its $2 a_i - 1$ variables.
  Thus the new \MAJ-constraint is satisfied if and only if the old OR-constraint 
  was.
  The reductions adds at most~$k$ groups~$Z_i$ of variables, which form modules 
  in the incidence graph, and so the incidence neighborhood diversity of the 
  output instance is at most~$2k$.
\end{proof}

\subsection{Exact Algorithm parameterized by vertex cover}

Motivated by the negative result of Theorem \ref{thm:majhard} we now
investigate the complexity of \MAJ\ for more restricted parameters. The first
parameter we consider is the vertex cover of the incidence graph. This is a
natural, though quite restrictive, parameter which is often considered for
problems which are W-hard for treewidth.

\begin{theorem}\label{thm:thr_vc}
 \MTHR\ parameterized by the incidence vertex cover $\p{\vc}$ is~$\FPT$.
\end{theorem}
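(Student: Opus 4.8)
The plan is to exploit the fact that a vertex cover of the incidence graph gives us a small set of vertices covering all edges, and then to brute-force over the structurally relevant part of the instance. Let $\p{\vc}=k$ and let $S$ be a vertex cover of the incidence graph $\G{\phi}$ of size $k$. Since the incidence graph is bipartite with variable-vertices on one side and constraint-vertices on the other, $S$ splits naturally into a set $S_V$ of variables and a set $S_C$ of constraints. The key observation is that every vertex \emph{outside} $S$ is isolated with respect to the edges not covered by $S$: in particular, every constraint not in $S_C$ has \emph{all} of its variables inside $S_V$, because any edge from such a constraint to a variable outside $S_V$ would be uncovered. Hence all but at most $k$ constraints are entirely supported on the at most $k$ variables in $S_V$.

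First I would classify the constraints into two groups: the ``small'' constraints $\psi_{\mathrm{small}}$ whose variables all lie in $S_V$ (these number potentially many, but they live on only $k$ variables), and the ``big'' constraints $S_C$ (at most $k$ of them), each of which may involve arbitrarily many variables outside $S_V$. The algorithm guesses the restriction of the optimal assignment to the $k$ variables in $S_V$; there are at most $2^k$ such partial assignments, which is the FPT branching factor. For each fixed partial assignment $\alpha: S_V \to \{0,1\}$, the value of every small constraint is completely determined, so we can count exactly how many of $\psi_{\mathrm{small}}$ are satisfied by $\alpha$. It then remains to optimally extend $\alpha$ to the variables outside $S_V$ so as to maximize the number of satisfied big constraints from $S_C$.

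The heart of the matter is the extension step for the $\le k$ big constraints, and this is where I expect the main obstacle to lie. Each variable $x \notin S_V$ occurs only in constraints of $S_C$ (by the covering argument above), so the only thing that matters about $x$ is its \emph{pattern} of occurrence across the $\le k$ big constraints — that is, which of them contain $x$ positively, which negatively, and which not at all. Two outside-variables with the same pattern are interchangeable. Since each of the $\le k$ constraints sees $x$ in one of three states (positive literal, negative literal, absent), there are at most $3^k$ distinct patterns, so the outside-variables fall into at most $3^k = f(k)$ types. For each \THR\ constraint $C\in S_C$, the number of its literals set to true is a sum of: a fixed contribution from the already-assigned variables in $S_V$ (determined by $\alpha$), plus a contribution determined only by \emph{how many} variables of each type we set to true. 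Thus the extension problem reduces to choosing, for each type, how many of its variables to set true, subject to maximizing the number of big constraints whose true-count meets its threshold.

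I would solve this bounded-type extension problem by an integer-programming or dynamic-programming argument over the $f(k)$ type-counts: the decision of how many variables of each type to switch on takes integer values in bounded-cardinality ranges, and because there are only $f(k)$ types and $\le k$ constraints, one can either set up an ILP with $f(k)$ variables and $O(k)$ constraints (solvable in FPT time by Lenstra's algorithm, treating each constraint's satisfaction as a binary choice guessed in advance among the $2^k$ possibilities) or, more directly, guess the set $T\subseteq S_C$ of big constraints we intend to satisfy ($2^k$ choices) and then check feasibility of the resulting system of threshold inequalities over the type-counts via integer programming in fixed dimension. Combining everything, the algorithm tries all $2^k$ assignments $\alpha$ to $S_V$, and for each runs the FPT extension routine; the total running time is $f(k)\cdot\poly(n+m)$, which establishes that \MTHR\ parameterized by $\p{\vc}$ is \FPT. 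The same argument applies verbatim to \MMAJ, since majority is the special case of threshold with $t(C)=\lceil|C|/2\rceil$.
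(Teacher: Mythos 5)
Your proposal is correct and follows essentially the same route as the paper's proof: branch over the $2^k$ assignments to the cover variables, count the satisfied constraints lying outside the cover directly, then handle the at most $k$ remaining constraints by classifying outside variables into $3^k$ occurrence types, guessing the subset of constraints to satisfy, and solving the resulting feasibility problem with Lenstra's fixed-dimension integer programming algorithm. The paper's proof is the same argument, with the type classification formalized as vectors $v(x)\in\set{-1,0,1}^k$ and the threshold conditions written out as explicit linear inequalities over the type-count variables.
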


\begin{proof}
  Given a \CSP\ $\phi$ with \THR-constraints over a variable set $X$, and a 
  size-$k$ vertex cover~$S$ of the incidence graph, we define
  $S_X, S_\phi \subseteq S$ to be the vertices of $S$ corresponding to variables 
  and constraints, respectively.
  The algorithm starts by branching into $2^{\abs{S_X}}\le 2^k$ cases, 
  corresponding to truth assignments~$\sigma$ on the variables in~$S_X$.
  Since~$S$ is a vertex cover, all variables of constraints $C\in \phi\setminus 
  S_\phi$ are contained in~$S_X$, and so we can compute how many of these 
  constraints are satisfied by $\sigma$; let's call this number $N_\sigma$.
  After fixing the variables in~$S_X$ and removing the constraints in 
  $\phi\setminus S_\phi$, what remains is a \CSP\ $\phi'$ with at most $k$ 
  constraints.
  Let $N'_{\sigma}$ be the maximum number of constraints that can be 
  simultaneously satisfied in $\phi'$.
  Then $\max_\sigma N_\sigma + N'_\sigma$ is the maximum number of constraints 
  of $\phi$ that can be simultaneously satisfied.
  Thus it remains to compute $N'_{\sigma}$ in \FPT-time.

  Let $\phi'$ be a \MCSP\ with $k$ constraints.
  We reduce it to $2^k$ instances~$\phi''$ of the standard \CSP-version of \THR, 
  by guessing for each constraint whether or not it is satisfied by an optimal 
  solution.
  The size of the largest subset of constraints that can be simultaneously 
  satisfied is then precisely the optimal value of the \MCSP-instance.

  Finally, to solve \THR, we further reduce $\phi''$ to an integer linear 
  program (ILP) with $3^k$ variables.
  This suffices since solving ILPs is \FPT\ when parameterized by the number of 
  variables~\cite{lenstra1983integer}.
  Let $k$ be the number of constraints of $\phi''$.
  We associate with each variable $x$ of $\phi''$ the vector 
  $v(x)\in\set{-1,0,1}^k$ with
  \begin{align*}
    v(x)_i
    &=
    \begin{cases}
      1 & \text{if $x$ occurs positively in the $i$-th constraint,}
      \\
      -1 & \text{if $x$ occurs negatively in the $i$-th constraint, and}
      \\
      0 & \text{if $x$ does not occur in the $i$-th constraint.}
    \end{cases}
  \end{align*}
  For each vector $v\in\set{-1,0,+1}^k$, we add a variable $\ell_v$ to the ILP and 
  the constraints $0\le \ell_v \le B_v$, where $B_v\in\mathbf N$ is the number 
  of variables~$x$ with $v(x)=v$.
  The variable $\ell_v$ is supposed to indicate how many of the variables of 
  type~$v$ are set to true.
  Finally, we translate each constraint $C_i\in\phi''$ to a linear equation.
  Let $T_i$ be the threshold of the constraint, and let $n_i$ be the number of 
  variables that occur as negative literals in~$C_i$.
  Then we add the following inequality to the ILP:
  \begin{align}\label{eq:ILP}
    \paren[\Big]{
      \sum_{\substack{v\in\set{-1,0,1}^k\\v_i=1}}\ell_v
    }
    -
    \paren[\Big]{
      n_i
      -
      \sum_{\substack{v\in\set{-1,0,1}^k\\v_i=-1}}\ell_v
    }
    &
    \ge
    T_i
    \,.
  \end{align}

  To prove the completeness of this reduction, let $\phi''$ have a satisfying 
  assignment~$\sigma$.
  Then we set $\ell_v$ to be the number of variables~$x$ with $v(x)=v$ such that 
  $\sigma(x)=1$.
  This satisfies $0\le\ell_v\le B_v$.
  All other linear constraints are generated from some constraint~$C_i$ of 
  $\phi''$, which is satisfied by~$\sigma$.
  The first term in the difference of~\eqref{eq:ILP} is exactly the number of 
  variables $x$ that are set to true under $\sigma$ and that occur positively in 
  $C_i$, and the second term is the number of variables~$x$ that are set to 
  false and that occur negatively in~$C_i$; thus, the left-hand side is the 
  number of literals set to true and the right-hand side is the threshold 
  of~$C_i$, and so $\eqref{eq:ILP}$ holds.

  For the soundness, assume there is a solution $\paren{\ell_v}_v$ for the ILP.
  Then we construct an assignment~$\sigma$ as follows: For each~$v$, arbitrarily 
  select $\ell_v$ of the $B_v$ variables with $v(x)=v$ to true, and set the 
  others to false.
  Then for each constraint~$C_i$ of $\phi''$, the linear constraint 
  \eqref{eq:ILP} guarantees that $C_i$ is satisfied.
  This finishes the correctness proof of the final reduction; overall, we solve 
  \MTHR\ in \FPT-time when parameterized by \p\vc.
\end{proof}

\subsection{Approximation Algorithm parameterized by feedback vertex set}

The results of Theorem \ref{thm:majhard} naturally pose the following question:
can we evade the W-hardness of \MAJ\ by designing an $\FPTAS$ for the problem?
In this section, though we do not resolve this question, we give some first
positive indication that this may be possible. We consider \MMAJ\ parameterized
by the incidence graph's feedback vertex set. This is a natural,
well-studied parameter that generalizes vertex cover but is a restriction of
treewidth. It is also connected to the concept of back-door sets to acyclicity,
which is well-studied in the parameterized CSP literature \cite{OPS13,GS12}.

Observe that approximating this CSP is non-trivial, since \MMAJ\ with
constraints of arity two already generalizes \textsc{Max-2SAT}, and is hence
APX-hard.
On the other hand, \MMAJ\ can easily be 2-approximated by considering
any assignment and its negation. Hence, the natural goal here is an
approximation ratio of $(1-\epsilon)$.
Using Corollary~\ref{thm:thr_vc} as a sub-routine we achieve this with an 
\FPTAS.

\begin{lemma}\label{lem:thr_acyclic}
There exists a polynomial-time algorithm that, given an instance $\phi$ of \MTHR\ whose incidence graph is acyclic, 
finds an optimal assignment to $\phi$. Furthermore, there is a truth assignment satisfying at least half of the constraints simultaneously.
\end{lemma}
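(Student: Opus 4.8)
The plan is to prove both assertions with a single dynamic program over the incidence forest, which simultaneously gives the exact polynomial-time algorithm and the bound $\mathrm{OPT}\ge m/2$. Since the incidence graph is bipartite, I would root each tree of the forest at a variable; then levels alternate between variables and constraints, so every non-root constraint $D$ has a unique parent variable, while its remaining variables are children whose subtrees are pairwise variable-disjoint. For a variable $v$ and $b\in\{0,1\}$, let $g_v(b)$ be the maximum number of constraints in the subtree rooted at $v$ satisfiable by an assignment with $v=b$; for a constraint $D$ and $c\in\{0,1\}$, let $h_D(c)$ be the maximum number of constraints in $D$'s subtree satisfiable given that $D$'s parent contributes $c$ to $D$'s count of true literals. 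Because distinct child subtrees are variable-disjoint and $D$'s satisfaction depends only on its parent and its own children, these values decompose as sums over children, and the optimum of a component is $\max_b g_r(b)$ at its root $r$.

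First I would fix the recurrences and the running time. Here $g_v(b)=\sum_i h_{D_i}(c_i)$, summed over $v$'s child constraints $D_i$, where $c_i$ is the value of $v$'s literal in $D_i$ under $v=b$. To compute $h_D(c)$, write $G_j(\beta)$ for the best subtree value of the $j$-th child variable when its literal in $D$ takes value $\beta$ (these are just $g_{w_j}(0),g_{w_j}(1)$ in the appropriate order). Then $h_D(c)=\max\bigl(\mathrm{OptSub},\,1+B(t-c)\bigr)$, where $\mathrm{OptSub}=\sum_j\max_\beta G_j(\beta)$ and $B(k)$ is the largest attainable $\sum_j G_j(\beta_j)$ subject to at least $k$ of the child literals being true. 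The value $B(k)$ comes from a trivial greedy step: set every child literal to its better value, then, sorting children by the gain $G_j(1)-G_j(0)$, flip the cheapest ones until $k$ literals are true. All values lie in $\{0,\dots,m\}$ and each combination is polynomial, so one post-order traversal runs in polynomial time, proving the first assertion; correctness is the standard tree-DP argument using disjointness of subtrees.

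For the bound I would prove by induction on the subtree that $g_v(0)+g_v(1)\ge m_v$ and $h_D(0)+h_D(1)\ge m_D$, where $m_v,m_D$ count constraints in the respective subtrees; taking the maximum then yields $\max_b g_r(b)\ge m_r/2$. The key simplification is that flipping $v$ complements $v$'s literal in \emph{every} child constraint at once, so $g_v(0)+g_v(1)=\sum_i\bigl(h_{D_i}(0)+h_{D_i}(1)\bigr)$ and it suffices to prove the per-constraint inequality. Fix $D$, set $\Sigma=\sum_j m_{w_j}$, and note the hypothesis gives $2\,\mathrm{OptSub}\ge\Sigma$ while $h_D(c)\ge\mathrm{OptSub}$ always; the only thing to recover is the ``$+1$'' for $D$. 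If some child subtree carries surplus, i.e.\ $2\max_\beta G_j(\beta)>m_{w_j}$, then $2\,\mathrm{OptSub}-\Sigma\ge1$ and we are done. Otherwise every child literal can be reoriented at no cost, so setting all of them true together with $c=1$ satisfies $D$ (here I use that each threshold constraint is individually satisfiable, $t\le|D|$, which holds for \MAJ\ and for the AT-LEAST/AT-MOST encodings), giving $h_D(1)\ge 1+\mathrm{OptSub}$. The degenerate cases of childless constraints, isolated variables, and arity-zero constraints are checked directly.

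The main obstacle I anticipate is exactly this last per-constraint step: a high-threshold constraint may be satisfiable only by orienting its children in a way that is suboptimal for their own subtrees, so there is a genuine trade-off between satisfying $D$ and the recursive value. The surplus/flexibility dichotomy above is what resolves it — either the child subtrees already carry enough surplus to pay for $D$, or they are flexible enough to satisfy $D$ for free — and confirming that this case split is exhaustive is the crux of the argument.
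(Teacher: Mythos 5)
Your proposal is correct, but it takes a genuinely different route from the paper's proof. The paper does not set up a dynamic program at all: it greedily peels the tree from the bottom. It repeatedly picks a variable $v$ farthest from the root, observes that all of $v$'s children are unit constraints $(v)$ or $(\neg v)$, sets $v$ by majority vote over these children --- breaking ties so that $v$'s literal in its parent constraint becomes true --- then removes $v$, decrements the thresholds of constraints in which $v$ occurs as a true literal, deletes satisfied (threshold-zero) and isolated vertices, and recurses. Optimality is a local exchange argument (by the tie-breaking rule, extending any assignment of the reduced instance by the chosen value of $v$ is never worse than the opposite choice), and the half bound is immediate from a per-step invariant: among the constraints removed in a step, the unit children are at least half satisfied by the majority vote, and any other removed constraint has threshold zero, hence is satisfied. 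Your two-table tree DP with $g_v(\cdot)$ and $h_D(\cdot)$, the inner greedy for the cardinality-constrained subproblem $B(k)$, and the induction $g_v(0)+g_v(1)\ge m_v$, $h_D(0)+h_D(1)\ge m_D$ via the surplus/flexibility dichotomy, reaches the same two conclusions by a more systematic path. What each approach buys: the paper's greedy is shorter and yields the half bound with essentially no extra work, whereas your DP is more modular --- the exact-optimization part goes through for completely arbitrary thresholds, including individually unsatisfiable constraints --- and it makes explicit the hypothesis $t\le|D|$ on which the second assertion genuinely depends. The paper uses that hypothesis only implicitly: its claim that every leaf constraint has the form $(v)$ or $(\neg v)$, and its per-step invariant, both presuppose that each threshold lies between $1$ and the arity; indeed, the ``at least half'' sentence of the lemma is false for general \THR\ instances (consider a single constraint requiring two true literals among a single variable), so flagging this assumption, as you do, is a point in your favor rather than a gap.
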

\begin{proof}
      We assume that the incidence graph is connected since the application of an algorithm for each connected components
      will lead to an optimal solution to the original instance. We may assume that there is no  isolated vertex or a constraint vertex whose threshold is equal to zero: 
      if one exists, we can remove the corresponding variable or constraint without changing the set of optimal solutions.
      
      Consider the incidence graph as a tree 
      each rooted at a variable vertex.        
      Pick a variable vertex $v$ which
	is farthest from the root. Let $C_1,\ldots , C_p$ (possibly $p=0$) be the children of $v$. 
	All of $C_1,\ldots , C_p$ are exactly the constraints of the form $(v)$ or $(\neg v)$ since 
	they are leaves of the tree and thus contain no variable other than $v$. Furthermore, there is no other constraint having $v$ as
	the sole literal: the only remaining constraint, if one exists, is the parent $C$ of $v$ and $C$ must be incident with another variable vertex 
	since $C$ cannot be the root. We set the variable $v$ either to true or false so as to 
	maximize the number of satisfied constraints among $C_1,\ldots , C_p$. If there is a tie, that is, if $p=0$ or there are equal number of constraints
	of the form $(v)$ and $(\neg v)$, then (i) if $v$ is not the root, we set $v$ so that $v$ appears in its parent $C$ as a true literal, (ii) otherwise, we set $v$ arbitrarily.  
	After setting the assignment to $v$, we remove $v$ and decrease by one the threshold of all constraints in which $v$ appears as a true literal 
	(and remove all isolated vertices and  constraint vertices whose threshold becomes zero). We repeat the procedure until the incidence
	graph becomes empty.
	
	Clearly, the above procedures finds an optimal assignment when $v$ is the root. Suppose that $v$ is not the root and we set $v$ to true.
	Let $\phi'$ be
	the resulting instance. We claim that for any assignment to $\phi'$, additionally setting $v$ to true satisfies 
	as many constraints of $\phi$ as setting $v$ to false. Indeed, the claim holds if $C$, the parent of $v$, is satisfied by the extended assignment. 
	If $C$ is not satisfied by the extended assignment, notice that this is because there are strictly more constraints of the form $(v)$ than those of 
	the form $(\neg v)$. Hence, the claim holds in this case as well. A symmetric argument holds when we set $v$ to false. It follows that 
	the above algorithm finds an optimal assignment to $\phi$ in polynomial time. 
	
	Notice that at every step we choose a variable vertex $v$ 
	and set the assignment at $v$, at least half of the constraints which are removed at the end of the step are satisfied.
	The second part of the statement follows.
	\end{proof}

\begin{theorem}\label{thm:thr_fvs}
  There exists an $\FPTAS$ which, given $\epsilon >0$ and an instance $\phi$ of 
  \MMAJ, computes a $(1-\epsilon)$-approximate solution and runs in time 
  $f(\p{\fvs},\epsilon)\cdot \poly(n)$, where $\p{\fvs}$ is the size of the 
  smallest feedback vertex set of the incidence graph of $\phi$.
\end{theorem}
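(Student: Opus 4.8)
The plan is to reduce the theorem to the two exact algorithms already at hand --- Theorem~\ref{thm:thr_vc} for bounded incidence vertex cover and Lemma~\ref{lem:thr_acyclic} for acyclic incidence graphs --- via a case distinction on the number $m$ of constraints. The crucial structural fact I would establish first is that every \MMAJ\ instance satisfies $\mathrm{OPT}\ge m/2$: for any assignment $a$ and any majority constraint (even arity, threshold equal to half the arity), at least one of $a$ and its bitwise complement $\bar a$ satisfies the constraint, so $\mathrm{sat}(a)+\mathrm{sat}(\bar a)\ge m$ and hence some assignment satisfies at least $m/2$ constraints. This lets me convert an additive error of $\epsilon m/2$ into a multiplicative $(1-\epsilon)$ guarantee, since $\epsilon m/2\le\epsilon\cdot\mathrm{OPT}$.

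Let $k=\p{\fvs}$ and let $F$ be a minimum feedback vertex set of the incidence graph, computable in \FPT\ time; I split it as $F=F_X\dotcup F_\phi$ into its variable and constraint vertices. \emph{If $m\le 2k/\epsilon$}, then the set of all $m$ constraint vertices is a vertex cover of size at most $2k/\epsilon$, so Theorem~\ref{thm:thr_vc} (applied to the special case of majority constraints, which are threshold constraints) computes the exact optimum in time $f(2k/\epsilon)\poly(n)$, which is $f'(k,\epsilon)\poly(n)$.

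\emph{Otherwise $m>2k/\epsilon$, that is, $k<\epsilon m/2$.} Here I would discard the at most $k$ constraints in $F_\phi$ and branch over all $2^{\abs{F_X}}\le 2^k$ truth assignments to the variables of $F_X$. Fixing such a partial assignment turns every surviving constraint $C\notin F_\phi$ into a threshold constraint on the forest variables only (lower its threshold by the number of its $F_X$-literals set true, and declare it satisfied once the threshold drops to $0$ or below); the incidence graph of the residual instance is exactly $G-F$, a forest, so Lemma~\ref{lem:thr_acyclic} solves it optimally in polynomial time. I return the best assignment found over all branches. For correctness I would inspect the branch agreeing with an optimal assignment $\sigma^*$ on $F_X$: there the computed forest optimum satisfies at least as many non-$F_\phi$ constraints as $\sigma^*$ does, and $\sigma^*$ itself satisfies at least $\mathrm{OPT}-\abs{F_\phi}\ge\mathrm{OPT}-k$ of them; since $k<\epsilon m/2\le\epsilon\cdot\mathrm{OPT}$, the output satisfies at least $(1-\epsilon)\mathrm{OPT}$ constraints. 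Both branches of the algorithm run in $f(k,\epsilon)\poly(n)$ time.

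The main obstacle --- and the point on which the argument really hinges --- is that a feedback vertex set mixes variable and constraint vertices, so neither the vertex-cover branching of Theorem~\ref{thm:thr_vc} nor the forest algorithm of Lemma~\ref{lem:thr_acyclic} applies out of the box. The two observations that make them fit together are that branching over $F_X$ alone already renders the residual instance acyclic, and that the few constraint vertices $F_\phi$ carrying the remaining cycles can simply be thrown away precisely in the regime $m>2k/\epsilon$, where their loss fits inside the $\epsilon\cdot\mathrm{OPT}$ budget --- exactly the complement of the regime in which the vertex-cover algorithm is affordable. I would also have to dispatch the routine degenerate cases (constraints whose residual threshold becomes nonpositive, or whose variables all lie in $F_X$), but these are absorbed by the preprocessing already present in Lemma~\ref{lem:thr_acyclic}.
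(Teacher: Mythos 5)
Your proposal is correct and follows essentially the same route as the paper's proof: a case split on whether $m$ is at most roughly $k/\epsilon$ (in which case the constraint side is a small vertex cover and Theorem~\ref{thm:thr_vc} solves the instance exactly) versus larger (in which case one branches over the assignments to the feedback-vertex-set variables, discards the at most $k$ feedback constraints, and solves the residual acyclic instance optimally via Lemma~\ref{lem:thr_acyclic}, the loss of $k$ constraints being absorbed since $\mathrm{OPT}\ge m/2$). The only cosmetic difference is that you obtain $\mathrm{OPT}\ge m/2$ directly from the assignment-versus-complement argument on the majority constraints, whereas the paper invokes the half-satisfiability guarantee of Lemma~\ref{lem:thr_acyclic} on the residual instance; your variant is, if anything, slightly more robust.
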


\begin{proof}
If $|\phi|\le (1+\sfrac{2}{\epsilon})k$, the the number of constraints is 
bounded by a function of $k$ and $\epsilon$, and we can use the \FPT-algorithm from Theorem
~\ref{thm:thr_vc}.

Therefore, we assume that $|\phi| > (1+\sfrac{2}{\epsilon})k$. With a similar argument as in the 
      proof of Theorem~\ref{thm:thr_vc} we can consider that the 
      $\fvs(\G{\phi})$ contains only constraint vertices (that is, we guess the 
      assignment of variables in the feedback vertex set). We now proceed by 
      simply deleting these constraints from the instance, and let $\phi'$ be the resulting instance. 
      Note that the incidence graph of $\phi'$ is acyclic. We invoke the polynomial-time algorithm of 
      Lemma~\ref{lem:thr_acyclic} to find an optimal assignment to $\phi'$. 
      
      Call the produced solution 
      $\textrm{SOL}(\phi)$ and the optimal solution $\textrm{OPT}(\phi)$.  From the optimality of the solution on $\phi'$, we have: 
      $\textrm{OPT}(\phi)\ge \textrm{OPT}(\phi')\ge \textrm{OPT}(\phi) - k$. Now, observe that $\textrm{OPT}(\phi')\ge 
      |\phi'|/2$ by Lemma~\ref{lem:thr_acyclic}. Therefore $\textrm{OPT}(\phi) > 
      k/{\epsilon}$ which gives $\frac{\textrm{OPT}-k}{\textrm{OPT}} > 
      1-\epsilon$.

\end{proof}

\bibliographystyle{abbrv}
\bibliography{param-sat}

\end{document}